\def\IEEEsubmission{0}

\def\complexNumbers{\mathbb{C}}

\def\positiveIntegers{\mathbb{Z}^+}

\def\primaryDataSignal{x}
\def\primaryDataSymbol{S}

\def\primaryOFDMSignal{X}

\def\backscatterOutputSignal{x_\mathrm{out}}

\def\backscatterDataSignal{b}
\def\BDdataFreq{B}

\def\primaryReceivedSignal{y}
\def\DTreceivedPRxSignal{y_\mathrm{p}}
\def\DTreceivedDirectSignal{y_\mathrm{d}}
\def\DTreceivedBDSignal{y_\mathrm{b}}

\def\ReceivedSignalFreq{Y}

\def\DreceivedDirectSignalFreq{\ReceivedSignalFreq_\mathrm{d}}
\def\DreceivedBDSignalFreq{\ReceivedSignalFreq_\mathrm{b}}

\def\timeDomainChannel{h}

\def\directChannel{\timeDomainChannel_\mathrm{d}}
\def\forwardChannel{\timeDomainChannel_\mathrm{f}}
\def\backwardChannel{\timeDomainChannel_\mathrm{b}}
\def\directChannelFreq{H_\mathrm{d}}
\def\forwardChannelFreq{H_\mathrm{f}}
\def\backwardChannelFreq{H_\mathrm{b}}
\def\channelLength{\mathcal{L}}

\def\ChannelComplexAmplitude{\gamma}
\def\timeDelay{\tau}

\def\ChannelPathIndex{i}

\def\noiseSignalTime{w}
\def\noiseSignalFreq{W}
\def\noiseSignalFreqDirect{W_\mathrm{d}}
\def\noiseSignalFreqBD{W_\mathrm{b}}
\def\noisPower{\sigma_w^2}

\def\TotalSubcarriers{N}
\def\ActiveSubcarriers{K}

\def\CPlength{N_\mathrm{CP}}

\def\dataSubcarrierSet{\mathcal{K}_\mathrm{d}}
\def\nullSubcarrierSet{\mathcal{K}_\mathrm{b}}
\def\fskOneset{\mathcal{K}_\mathrm{b_0}}
\def\fskTwoset{\mathcal{K}_\mathrm{b_1}}

\def\TotalDataSubcarriers{\TotalSubcarriers_\mathrm{d}}
\def\TotalBDSubcarriers{\TotalSubcarriers_\mathrm{b}}

\def\LoadImpedance{Z_\mathrm{L}}
\def\AntennaImpedance{Z_\mathrm{A}}

\def\AntennaImpedancePhase{\theta_\mathrm{A}}
\def\LoadImpedancePhase{\theta_\mathrm{L}}
\def\phase#1{\theta_\mathrm{#1}}
\def\reflectionCoefficient{\Gamma_{\rm b}}

\def\BDFreqShift{f_\zeta}

\def\pfa{P_\mathrm{FA}}
\def\pmd{P_\mathrm{MD}}
\def\poe {P_\mathrm{e}}
\def\probability#1{\mathrm{Pr}(#1)}
\def\threshold{\eta}
\def\cdf#1#2{F_{#1}(#2)}
\def\pdf#1#2{f_{#1}(#2)}
\def\characfun#1#2{\Phi_{{#1}}(#2)}
\def\ts#1{r_{#1}} 

\if\IEEEsubmission1
\documentclass[journal,12pt,onecolumn,draftclsnofoot]{IEEEtran}
\else
\documentclass[journal]{IEEEtran}
\fi
\usepackage{multicol}
\usepackage{mathtools}
\usepackage{amsthm}
\usepackage{acronym}
\usepackage[bookmarksopen=true]{hyperref}
\usepackage{stfloats}
\usepackage{amsfonts}
\usepackage{acronym}
\usepackage[noadjust]{cite}
\usepackage{multirow}
\usepackage{bm}
\usepackage{enumitem}
\usepackage{amsmath,amssymb}
\usepackage{graphicx}
\usepackage{epstopdf}
\epstopdfsetup{update} 
\usepackage[table,xcdraw]{xcolor}
\usepackage[caption=false,font=footnotesize]{subfig}
\usepackage[geometry]{ifsym}
\usepackage{array}
\usepackage[utf8]{inputenc}
\usepackage[T1]{fontenc}
\usepackage{pifont}
\usepackage{tabularray}
\usepackage{lipsum}
\setlength{\columnsep}{0.243 in}
\setlength{\voffset}{0.05in}
\def\BibTeX{{\rm B\kern-.05em{\sc i\kern-.025em b}\kern-.08em
		T\kern-.1667em\lower.7ex\hbox{E}\kern-.125emX}}


\newcommand\mydots{\hbox to 1em{.\hss.\hss.}}


\newtheorem{theorem}{Theorem}

\newtheorem{lemma}{Lemma}

%
%

\makeatletter
\newif\ifAC@uppercase@first%
\def\Aclp#1{\AC@uppercase@firsttrue\aclp{#1}\AC@uppercase@firstfalse}%
\def\AC@aclp#1{%
	\ifcsname fn@#1@PL\endcsname%
	\ifAC@uppercase@first%
	\expandafter\expandafter\expandafter\MakeUppercase\csname fn@#1@PL\endcsname%
	\else%
	\csname fn@#1@PL\endcsname%
	\fi%
	\else%
	\AC@acl{#1}s%
	\fi%
}%
\def\Acp#1{\AC@uppercase@firsttrue\acp{#1}\AC@uppercase@firstfalse}%
\def\AC@acp#1{%
	\ifcsname fn@#1@PL\endcsname%
	\ifAC@uppercase@first%
	\expandafter\expandafter\expandafter\MakeUppercase\csname fn@#1@PL\endcsname%
	\else%
	\csname fn@#1@PL\endcsname%
	\fi%
	\else%
	\AC@ac{#1}s%
	\fi%
}%
\def\Acfp#1{\AC@uppercase@firsttrue\acfp{#1}\AC@uppercase@firstfalse}%
\def\AC@acfp#1{%
	\ifcsname fn@#1@PL\endcsname%
	\ifAC@uppercase@first%
	\expandafter\expandafter\expandafter\MakeUppercase\csname fn@#1@PL\endcsname%
	\else%
	\csname fn@#1@PL\endcsname%
	\fi%
	\else%
	\AC@acf{#1}s%
	\fi%
}%
\def\Acsp#1{\AC@uppercase@firsttrue\acsp{#1}\AC@uppercase@firstfalse}%
\def\AC@acsp#1{%
	\ifcsname fn@#1@PL\endcsname%
	\ifAC@uppercase@first%
	\expandafter\expandafter\expandafter\MakeUppercase\csname fn@#1@PL\endcsname%
	\else%
	\csname fn@#1@PL\endcsname%
	\fi%
	\else%
	\AC@acs{#1}s%
	\fi%
}%
\edef\AC@uppercase@write{\string\ifAC@uppercase@first\string\expandafter\string\MakeUppercase\string\fi\space}%
\def\AC@acrodef#1[#2]#3{%
	\@bsphack%
	\protected@write\@auxout{}{%
		\string\newacro{#1}[#2]{\AC@uppercase@write #3}%
	}\@esphack%
}%
\def\Acl#1{\AC@uppercase@firsttrue\acl{#1}\AC@uppercase@firstfalse}
\def\Acf#1{\AC@uppercase@firsttrue\acf{#1}\AC@uppercase@firstfalse}
\def\Ac#1{\AC@uppercase@firsttrue\ac{#1}\AC@uppercase@firstfalse}
\def\Acs#1{\AC@uppercase@firsttrue\acs{#1}\AC@uppercase@firstfalse}

\acrodef{BC}{backscatter communications}
\acrodef{SR}{symbiotic radio}
\acrodef{IoT}{Internet-of-Things}
\acrodef{OFDM}{orthogonal frequency division multiplexing}
\acrodef{BS}{base station}
\acrodef{Rx}[RX]{receiver}
\acrodef{BD}{backscatter device}
\acrodef{ICI}{inter-carrrier interference}
\acrodef{5G}{fifth-Generation}
\acrodef{LTE}{long-term evolution}
\acrodef{CP}{cyclic prefix}
\acrodef{FSK}{frequency-shift keying}
\acrodef{PFSK}{pseudo frequency-shift}
\acrodef{BFSK}{binary frequency-shift keying}
\acrodef{FFT}{fast Fourier transform}
\acrodef{DFT}{discrete Fourier transform}
\acrodef{IDFT}{inverse discrete Fourier transform}
\acrodef{IFFT}{inverse-\ac{fft}}
\acrodef{ISI}{inter-symbol interference}
\acrodef{ML}{maximum likelihood}
\acrodef{ZF}{zero-forcing}
\acrodef{MMSE}{minimum mean square error}
\acrodef{SNR}{signal-to-noise-ratio}
\acrodef{BER}{bit-error-rate}
\acrodef{CSCG}{circularly symmetric complex Gaussian}
\acrodef{PDF}{probability density function}
\acrodef{PoE}{probability of error}
\acrodef{RF}{radio frequency}
\acrodef{PMD}{probability of miss detection}
\acrodef{PFA}{probability of false alarm}
\acrodef{ROC}{receiver operating characteristic}
\acrodef{OOK}{on-off keying}
\acrodef{ASK}{amplitude-shift keying}
\acrodef{PD}{probability of detection}
\acrodef{CDF}{cumulative distribution function}
\acrodef{3GPP}{3rd generation partnership project}
\acrodef{BPSK}{binary phase shift keying}
\acrodef{PSK}{phase shift keying}
\acrodef{MSK}{Minimum shift keying}
\acrodef{MCU}{microcontroller unit}
\acrodef{RFID}{radio frequency identification}
\acrodef{CRC}{cyclic redundancy check}
\acrodef{ARQ}{automatic repeat requests}
\acrodef{CFO}{carrier frequency offset}

\begin{document}

\title{
	{Interference-Free Backscatter Communications for OFDM-Based Symbiotic Radio}\\

	\thanks{M. B. Janjua is with the Department of Electrical and Electronics Engineering, Istanbul Medipol University, Istanbul, 34810, Türkiye, and also with the R\&D Department of Oredata, Istanbul, Türkiye (email: bilal.janjua@oredata.com). A. \c{S}ahin is with the Electrical Engineering Department, University of
South Carolina, Columbia, SC, USA. (email: asahin@mailbox.sc.edu). H. Arslan is with the Department of Electrical and Electronics Engineering, Istanbul Medipol University, Istanbul, 34810, Türkiye (email: huseyinarslan@medipol.edu.tr)}
\thanks{This paper has been accepted for publication in part to IEEE Global Communication Conference (GLOBECOM) 2024 \cite{janjua2024}} 

	\author{Muhammad Bilal Janjua,~\IEEEmembership{Student Member,~IEEE,} Alphan~\c{S}ahin,~\IEEEmembership{Member,~IEEE} and H\"{u}seyin Arslan,~\IEEEmembership{Fellow, IEEE}} 
}
\maketitle

\begin{abstract}

This study proposes an orthogonal frequency division multiplexing (OFDM) based scheme to achieve interference-free backscatter communications (BC) in a symbiotic radio system. In this scheme, the backscatter device performs frequency shift keying (FSK) modulation to shift the primary signal, i.e., the OFDM symbols transmitted from a base station, in the frequency domain to transmit its information. Symbiotically, the \acl{BS} empties specific subcarriers within the band so that the received frequency-shifted signals from the backscatter device and the primary signal are always orthogonal. Based on the empty subcarrier placement and the corresponding FSK modulation, we propose three schemes including on-off keying (OOK), FSK-1, and FSK-2. Specifically, the first scheme integrates FSK into OOK modulation while the second and the third schemes are based on the conventional FSK modulation with different in-band null-subcarrier allocation. To address the channel estimation challenge for the signals arriving from a backscatter device, we consider a non-coherent detector for obtaining the information from the backscatter signal at the receiver. We derive the bit-error rate performance of the detector theoretically. The comprehensive simulations show that the proposed approach achieves a lower bit-error rate up to $10^{-4}$ at $30$ dB with BC by eliminating direct link interference. 
\end{abstract}
\begin{IEEEkeywords}
     Symbiotic radio, backscatter communication, OFDM, FSK, OOK, subcarrier allocation, and interference management.
\end{IEEEkeywords}

\section{Introduction}
\acresetall

Next-generation wireless networks are expected to be energy-efficient and sustainable by supporting use cases that 
involve \ac{IoT} devices with very low power consumption \cite{huang2019survey}. In this direction, passive and ambient \ac{IoT} devices were studied in the \Ac{5G} New Radio Release~18 and Release~19~\cite{lin2023bridge}. The ambient devices can be battery-less or have a battery and operate with the harvested energy from radio waves or other sources~\cite{butt2023ambient}. Under such limitations, modulation techniques relying on active components may not be feasible for communication due to their high power consumption, leading to the \ac{BC} paradigm that is widely used in \ac{RFID} systems for low-power information transfer~\cite{epcglobalradio,han2017wirelessly,niu2019overview,  liu2019next,stanacevic2020backscatter}. With \ac{BC}, the data is transmitted by modulating the response of the antenna to the incoming  \ac{RF} signals. {\color{black}In the \ac{RFID} systems, a dedicated \ac{RF} signal exciter is deployed to transmit unmodulated carrier signals for \ac{BC}. As the coverage area and the number of tags (i.e., \ac{BD}) increases, more carrier exciters are needed to serve all the \acp{BD} in the environment. Considering the limitations of \ac{RFID} systems, an ambient \ac{BC} is developed in which a \ac{BD} utilizes the signals transmitted by existing wireless systems as a carrier for data transmission \cite{liu2013ambient}. The ambient \ac{BC} eliminates the need for a dedicated device for carrier signal transmission and opens a new era of \ac{BC}. However, the reliability of ambient \ac{BC} is very low due to the interference between the unknown strong \ac{RF} and weak backscatter signal at the intended \ac{Rx}.} In this study, we address this interference challenge with the concept of \ac{SR} by designing the incoming \ac{RF} signals.

The idea of \ac{SR} for \ac{BC} in passive \ac{IoT} was first thoroughly investigated in \cite{long2019symbiotic}, where the \ac{BD} was considered a parasitic entity in a primary system, e.g., an existing wireless system. In contrast to ambient \ac{BC}, \ac{BD} is regarded as a component of the primary system in \ac{SR} and shares radio resources with the primary user, with the primary transmitter being designed to support its requirements. In \cite{liang2020symbiotic}, the authors show that \ac{SR} can exploit the benefits of ambient \ac{BC} and cognitive radio since \ac{SR} can share share the spectrum between primary users and \ac{BD}. Also, it can provide highly reliable \ac{BC} through a joint decoder at the receiver. The authors in \cite{zhang2021mutualistic} and \cite{xu2021enabling} analyze the mutualistic relationship between \ac{BD} and a primary system in \ac{SR}, where both systems benefit each other by sharing the radio resource. Some other \ac{SR}-based relationships between low-power \ac{IoT} devices and existing wireless systems are explored in \cite{zhang20196g, bariah2020prospective,chen2020vision,nawaz2021non,janjua2023survey}. These earlier studies suggest that by cooperative resource sharing and system design, \ac{SR} can avoid the need for new infrastructure to support \ac{BC} and may allow the network to overcome the coexistence issues. Also, \ac{SR} can mitigate interference between the primary and backscatter signals. Since most wireless systems, including cellular and Wi-Fi networks, utilize \ac{OFDM} as their standard waveform, an opportunity arises to design the \ac{OFDM} signal in \ac{SR} to address the interference issue in \ac{BC} concerning ambient \ac{IoT} devices.

Exploiting the \ac{OFDM} signal for \ac{BC} is critical for low-power passive \ac{IoT} devices as they coexist with the existing wireless systems. Within this context, multiple studies consider \ac{OFDM}-based \ac{BC} \cite{yang2016backscatter, elmossallamy2019noncoherenta, elmossallamy2018backscatter, nemati2020subcarrier, takahashi2019ambient, hara2021ambient, chen2023pilot}. These studies explore the cyclostationary features of \ac{OFDM} signal generated by an ambient wireless system for \ac{BC}. In~\cite{yang2016backscatter}, a backscatter modulation scheme is designed for \ac{BC} over \ac{OFDM} signal in the time domain. The uncorrupted part of the \ac{CP} is processed at the receiver for direct link interference cancellation and \ac{BD} signal detection. Still, the receiver must know the channel's lengths for this scheme to work in practical scenarios. Also, the waveform designed for \ac{BC} causes variation in the primary signal, which may lead to interference at the primary receiver. The guard bands in the \ac{OFDM} signal are used for \ac{BC} in \cite{elmossallamy2018backscatter, elmossallamy2019noncoherenta, elmossallamy2019noncoherentb}. Specifically, a \ac{BD} applies \ac{FSK} to shift the spectrum of the backscatter signal to the out-of-band region of the primary \ac{OFDM} signal to transmit its information. This approach cannot be directly implemented in \ac{SR} as the joint receiver needs to scan the adjacent channel to detect the data of \ac{BD}, which requires an additional filter to capture the backscatter signal in the guard subcarriers. Consequently, this solution adds complexity to the receiver. The authors in \cite{nemati2020subcarrier} introduce the subcarrier-wise ambient \ac{BC} and propose to transmit one-bit information at each subcarrier. This scheme requires a bank of passive notch filters at the \ac{BD},  significantly increasing its hardware complexity. The preamble and pilots in \ac{OFDM} signals are used for ambient \ac{BC} in \cite{takahashi2019ambient,hara2021ambient, chen2023pilot, liao2023band}, but utilizing the preamble and pilots degrades the channel estimation performance at the primary receiver, which reduces the reliability of the whole system. Additionally, this approach requires strict synchronization at the backscatter receiver. Besides, in \cite{ liao2023band}, the authors propose to use the cell-specific reference signals in the \ac{LTE} system to enable \ac{BC}. In particular, the \ac{BD} applies \ac{FSK} to create an artificial Doppler in the incident \ac{LTE} signal and shift the reference signals to another subcarrier to transmit its information. To this end, the \ac{BD} and 
backscatter receiver must have prior information on  \ac{LTE} reference signals. Nevertheless, these studies do not mainly design the \ac{OFDM} symbols to achieve reliable \ac{BC} for an \ac{SR} system. For instance, it is possible to arrange the data, pilot, and null subcarriers at the primary system to receive the direct path and \ac{BC} signals over different subcarriers if \ac{BD} is capable of manipulating the signal frequency \cite{bharadia2015backfi}. This arrangement can be achieved without changing the transceiver design of the primary system. Furthermore, such a system can support passive \ac{IoT} devices and ambient power-enabled \ac{IoT} devices in various wireless networks, including Wi-Fi, \ac{LTE}, \ac{5G}, and beyond.

\subsection{Contributions}

 In this work, we propose an \ac{OFDM}-based scheme in the context of \ac{SR}. Our contributions can be listed as follows:
 \begin{itemize}

 \item We design the \ac{OFDM} signal transmitted from a \ac{BS} such that it is always orthogonal to the backscatter signal within the band. To this end, we utilize \ac{FSK} modulations at the \ac{BD} and shift the data symbols to the subcarriers dedicated to \ac{BD} in the frequency domain. As a result, the primary user's and BD's data are received on separate subcarriers within the received \ac{OFDM} signal, and the interference originating from the direct link is circumvented. 
 \item By extending our preliminary work in \cite{janjua2024}, we analyze and compare the \ac{OOK} modulation with \ac{FSK} modulation schemes for \ac{BC}. The proposed \ac{OOK} waveform shifts the data symbols to the subcarriers dedicated to \ac{BD} during on period. We provide the theoretical derivation for the error rate based on Lemma \ref{lemma:1}. We evaluate the retransmission probability of the proposed schemes based on the \ac{RFID} Gen2 standard $5$-bit \ac{CRC} encoder/decoder. 
 \item We show the efficacy of the non-coherent detector in receiving \ac{BD}’s data at \ac{Rx}. We analyze the error-rate performance of the non-coherent detector analytically and assess the scheme via comprehensive simulations. We also demonstrate the impact of hardware impairments, specifically \ac{CFO}, on the efficiency of \ac{FSK}-based \ac{BC} schemes.     
\end{itemize}

{\em Organization:} The rest of the paper is organized as follows.  Section~\ref{sec:system_design} elaborates on the \ac{SR} system design and the notations and preliminaries used in the rest of the sections. Section~\ref{sec:scheme} discusses the proposed schemes in detail. In Section~\ref{sec:detection}, non-coherent detection and performance analysis are presented. Section~\ref{sec:results} provides the simulation and theoretical results and detailed discussions. The concluding remarks are given in Section~\ref{sec:conclusion}.
 
 {\em Notation:} The sets of complex numbers and positive integers are denoted by $\complexNumbers$ and $\positiveIntegers$, respectively. $E[\cdot]$ represents the expectation of its argument over random variables. $\mathcal{CN}(0,\sigma^2)$ is the circularly symmetric complex Gaussian distribution with zero mean and $\sigma^2$ variance. $|a|$ denotes the cardinality of a set $a$. $\mathrm{Pr}(A|B)$ represents the probability of an event $A$ given the event $B$. $\mathrm{Pr}(A)$ represents the probability of an event $A$. $\lfloor{\cdot}\rfloor$ represents the floor function. $\circledast$ is the circular convolution operation, and $*$ represents the linear convolution.

\section{Symbiotic Radio System Model}
\label{sec:system_design}

Consider an \ac{OFDM}-based \ac{SR} system consisting of two primary devices, i.e., a \ac{BS}, an \ac{Rx},  and a \ac{BD}, as shown in \figurename~\ref{fig:system_model}, where \ac{BS} and \ac{Rx} are assumed to be single-antenna active devices, and \ac{BD} is a single-antenna passive device. The \ac{BD} is a low-power, low-cost, and low-complexity device that belongs to a passive \ac{IoT} system. The primary signal transmitted by \ac{BS} is utilized as \ac{BD}'s carrier signal to transmit its information using \ac{BC} to the \ac{Rx}. Furthermore, the primary signal is a modulated signal containing the information for the \ac{Rx}, and the \ac{Rx}'s goal is to obtain the information that is transmitted from both \ac{BD} and \ac{BS}.

\begin{figure}
    \centering
    \includegraphics[width = \linewidth]{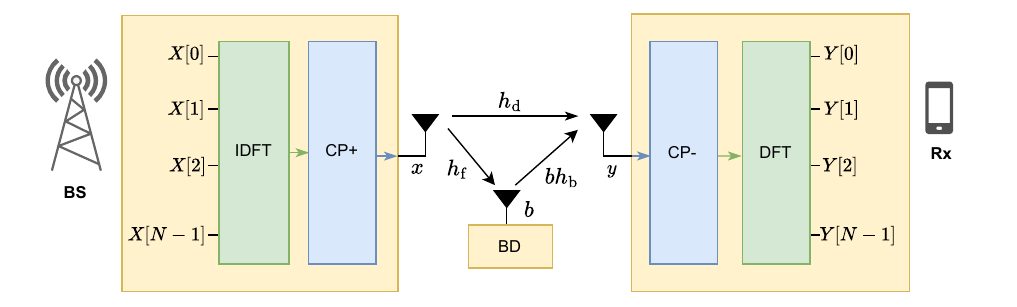}
    \caption{The system model with single BS, BD, and a joint RX.}
    \label{fig:system_model}
\end{figure}

\subsection{Primary Signal}
 The primary signal is an \ac{OFDM} symbol with $\ActiveSubcarriers$ active subcarriers containing data symbols for \ac{Rx}. Let $\primaryOFDMSignal[k]$ be the $k$-th data symbol at \ac{BS}. After an \ac{IDFT} is applied to the data symbols, the $n$-th time domain sample of the corresponding \ac{OFDM} symbol, denoted by $x[n]$, can be expressed as
\begin{equation}
    x[n]=\frac{1}{\TotalSubcarriers}\sum_{k=0}^{\ActiveSubcarriers-1}X[k]e^{\frac{j2\pi kn}{\TotalSubcarriers}}~,~n\in \{0,1,\cdots, \TotalSubcarriers-1\}~,
\end{equation}
where $\TotalSubcarriers$ is the \ac{IDFT} size. Afterward, a \ac{CP} of length $\CPlength$ that is larger than the maximum delay spread is added to the time domain signal, and the corresponding \ac{OFDM} samples are transmitted over a wireless channel. The bandpass signal transmitted by the \ac{BS} can be written as 
\begin{equation}
    x(t) =  x_\mathrm{bb}(t)e^{j2\pi f_\mathrm{c}t}~,
\end{equation}
where $x_\mathrm{bb}(t)$ is the baseband \ac{OFDM} signal with \ac{CP} in the time domain, and $f_\mathrm{c}$ denotes the carrier frequency.

\subsection{BD Signal}
\label{subsec: BD Signal}

Conventionally, the load modulation is applied by switching the \ac{BD}'s antenna between non-reflecting and reflecting states for transmitting $'0'$ and $'1'$, respectively~\cite{epcglobalradio}. The achievement of the reflecting state involves impedance matching, which is realized by connecting an antenna with an impedance of $\AntennaImpedance = |\AntennaImpedance|e^{j\AntennaImpedancePhase}$ to a load with an impedance of $\LoadImpedance = |\LoadImpedance|e^{j\LoadImpedancePhase}$, ensuring $\LoadImpedance = \AntennaImpedance$. As the $\LoadImpedance$ matches with the $\AntennaImpedance$, maximum power transfers from the source (i.e., antenna) to the load, which is used to energy harvesting or detect the information in the signal. Conversely, impedance mismatching is implemented to obtain a non-reflecting state by connecting the antenna to the load in such a way that $\LoadImpedance \neq \AntennaImpedance$. Based on the mismatch between the $\LoadImpedance$ and $\AntennaImpedance$, power is reflected from the source, which is used to backscatter the incident signal. It is noted that the amount of power transfer between the source and the load is related to the impedance matching and mismatch such that the perfect matching leads to maximum power transfer from the load to the source, whereas the perfect mismatch results in minimal power transfer to the load and maximum power reflection from the source. Hence, the reflection coefficient $\reflectionCoefficient$ for a backscatter modulator is expressed as \cite{xu2018practical}:
\begin{equation}
    \reflectionCoefficient = \frac{\LoadImpedance-\AntennaImpedance^*}{\LoadImpedance-\AntennaImpedance}=|\reflectionCoefficient| e^{j\phase{b}}~.
\end{equation}
Different values of $\reflectionCoefficient$ can be obtained by changing the values of $\LoadImpedance$ because the value of $\AntennaImpedance$ is constant based on the antenna structure. The amplitude and phase of the backscatter signal are described by $|\reflectionCoefficient|$ and phase $\phase{b}$, which can be defined as \cite{wu2022survey}: 
\begin{equation}
    |\reflectionCoefficient| = \frac{|\AntennaImpedance|+|\LoadImpedance|-2|\AntennaImpedance||\LoadImpedance|\mathrm{cos}(\AntennaImpedancePhase-\LoadImpedancePhase)}{|\AntennaImpedance|+|\LoadImpedance|+2||\AntennaImpedance||\LoadImpedance|\mathrm{cos}(\AntennaImpedancePhase-\LoadImpedancePhase)}~,
\end{equation}
\begin{equation}
    \phase{b} = \mathrm{arctan}\left(\frac{2|\AntennaImpedance||\LoadImpedance|\mathrm{sin}(\AntennaImpedancePhase-\LoadImpedancePhase)}{|\AntennaImpedance|^2|\LoadImpedance|^2}\right)~.
\end{equation}
The impedance of the \ac{BD} can be manipulated to attain complex values by carefully selecting appropriate inductance values. By connecting an RF switch to the loads $\LoadImpedance$ and alternately switching between them, the BD can effectively replicate the complex signal $j2\pi\BDFreqShift t$ with a frequency of $\BDFreqShift$ \cite{iyer2016inter,vougioukas2018switching}. Here, $\zeta$ represents the frequency index for shifting the primary signal's data symbols. 
By precisely controlling the amplitude and frequency of the reflected signal, both \ac{OOK} and \ac{FSK} modulations can be generated to transmit the information from the \ac{BD}.

\par Conventionally, a \ac{BD} modulates the incoming signal with a rectangular pulse generated by switching between two loads~\cite{epcglobalradio}. Such a modulation technique is not preferable for a receiver in \ac{SR} as changing load sharply requires \ac{BD} to transmit in the out-of-band region of the primary communication system. One of the methods to tackle this concern is pulse-shaping, which involves continuous load variations at \ac{BD} for \ac{BC} \cite{kimionis2016pulse}. A \ac{BD} consists of a variable load $\LoadImpedance$ (i.e., diode) controlled by a \ac{MCU}. The \ac{MCU} applies different voltage levels to the diode to perform continuous load modulations. This approach suppresses the out-of-band emissions and allows the synthesis of specific signals like a complex exponential for frequency modulation or shapes the time-frequency characteristics of the reflected signal. Besides, other \ac{BD} hardware architectures and designs are available in the literature that can generate complex signals to achieve single side-band \ac{FSK} without higher order and mirror harmonics \cite{ iyer2016inter}. As \ac{BD} design is not part of this study, interested readers are directed to the works in \cite{ding2020harmonic} and \cite{alhassoun2023spectrally}.

\par In this study, we consider a \ac{BD} consisting of multiple loads $\LoadImpedance$ (i.e., diode or transistors) and an \ac{MCU} discussed in \cite{iyer2016inter, wang2020low,  ding2020harmonic},  which can perform single side-band  \ac{FSK} and \ac{OOK} modulations. We assume that the \ac{BD}'s \ac{MCU} and other low-power operations are supported by harvesting ambient sources like \ac{RF} signals or solar energy \cite{long2019symbiotic}. However, the availability of harvested energy may vary in realistic environments depending on the available sources. A separate research study could be conducted to analyze the energy harvesting efficiency in such scenarios and develop sustainable energy harvesting methods for \ac{BD}. Let \ac{BD} use frequency modulation to transmit information. Then, the frequency-modulated primary signal at the output of \ac{BD} can be written as
\begin{equation}
    \backscatterOutputSignal (t) = (\forwardChannel(t) \ast \primaryDataSignal(t))\reflectionCoefficient\backscatterDataSignal(t)~,
\end{equation}
where $\backscatterDataSignal(t)$ is the signal generated at the \ac{BD}, and 
$\forwardChannel(t)$ represents the channel on the forward link (i.e., the link between \ac{BS} and \ac{BD}). 
Note that the implementations of single side-band \ac{FSK} discussed in \cite{iyer2016inter} and \cite{wang2020low} can be used to obtain  $\reflectionCoefficient\backscatterDataSignal(t)$ in practice. The rationale for employing \ac{FSK} modulation is to enable the orthogonal reception of the primary signal and the \ac{BD} signal at the receiver. Alternative modulation schemes, such as \ac{PSK} and \ac{MSK}, could potentially be used for the \ac{BD} signal. However, these other modulation approaches would come at the cost of performance degradation, loss of orthogonality between the primary and \ac{BD} signals, and increased receiver complexity.

\subsection{Received Signal}

 The signal transmitted by the \ac{BS} is received at the \ac{Rx} from the direct link and the backscatter link. The superposed signal in the baseband can be expressed as
\begin{equation}
    \primaryReceivedSignal(t) = \directChannel(t) \ast \primaryDataSignal(t)+\backwardChannel(t) \ast \backscatterOutputSignal (t)+\noiseSignalTime(t)~,
\end{equation}
where $\primaryDataSignal(t)$ is the \ac{Rx} data signal,  $\backscatterDataSignal(t)$ is the \ac{BD} data signal, $\ast$ denotes the convolution operation, and $\noiseSignalTime(t)$ is the additive white Gaussian noise. The functions $\directChannel(t)$ and $\backwardChannel(t)$ represent the continuous-time channel for the direct link (between \ac{BS} and \ac{Rx}) and backward link (between \ac{BD} and \ac{Rx}), respectively. Each channel path follows the multi-path Rayleigh fading, i.e., $\timeDomainChannel(\timeDelay)=\sum_{\ChannelPathIndex=0}^{\channelLength-1}\ChannelComplexAmplitude_\ChannelPathIndex\delta(\timeDelay-\timeDelay_\ChannelPathIndex)$ with complex amplitude $\ChannelComplexAmplitude_\ChannelPathIndex$ and delay $\timeDelay_\ChannelPathIndex$ of the $i$th path. Note that $\ChannelComplexAmplitude_\ChannelPathIndex$ also includes the shadow fading occurs in the wireless channel. The maximum delay excess delay of the composite channel is $\timeDelay_\mathrm{max} = \max\{\timeDelay_\mathrm{d}, \timeDelay_\mathrm{f}+\timeDelay_\mathrm{b}\}$, where $\timeDelay_\mathrm{d}$, $\timeDelay_\mathrm{f}$, and  $\timeDelay_\mathrm{b}$ represent the time delay corresponding to direct, forward, and backward links. After the \ac{CP} is discarded, the received signal at the \ac{Rx} in the discrete-time can be written as 
\begin{equation}
    \primaryReceivedSignal [n] = \DTreceivedDirectSignal[n]+\DTreceivedBDSignal[n]+\noiseSignalTime[n]~,
    \label{Eq:prx_received}
\end{equation}
where $\DTreceivedDirectSignal[n]=\directChannel[n]\ast \primaryDataSignal[n]$ is the signal received from the \ac{BS} through the direct link, $\DTreceivedBDSignal[n] = \forwardChannel[n] \ast \left(\backwardChannel[n]\ast(\primaryDataSignal[n] \reflectionCoefficient\backscatterDataSignal[n]\right)$ is the signal received from the \ac{BD} over the backscatter link, $\directChannel[n]$, $\forwardChannel[n]$, $\backwardChannel[n]$, $\primaryDataSignal[n]$, and $\backscatterDataSignal[n]$ represent the direct channel, forward channel, backward channel, primary signal, and backscatter signal in the discrete-time domain, respectively, and $\noiseSignalTime[n]\sim \mathcal{CN}(0,\noisPower)$ is the additive white Gaussian noise with zero mean and $\noisPower$ variance.

We note from \eqref{Eq:prx_received} that if \ac{OOK} modulation is used at \ac{BD} and the \ac{Rx} receives the superposed $\DTreceivedDirectSignal$ and $\DTreceivedBDSignal$ signals over the same frequency band, i.e., the signal $\DTreceivedDirectSignal$ interferes with the signal $\DTreceivedBDSignal$ due to high channel gains. Then, the \ac{Rx} must decode the primary signal and perform successive interference cancellation before detecting the \ac{BD} information from $\DTreceivedPRxSignal$. The primary signal decoding is not feasible in ambient \ac{BC} systems as the primary data is unknown to the \ac{Rx}. In an \ac{SR} system, such a decoding process degrades the performance of the \ac{BD} detector significantly and negatively impacts the reliability of the \ac{BC}. Moreover, successive interference cancellation increases the complexity of the \ac{Rx}. On the other hand, in \ac{FSK}-based approaches used in ambient \ac{BC} \cite{elmossallamy2019noncoherenta}, an unutilized frequency band other than the band of the primary signal is required to shift the primary signal and avoid direct link interference. To achieve a reliable \ac{BC} in an \ac{OFDM}-based \ac{SR} system, we propose a scheme to prevent direct link interference at the \ac{Rx}, which will be discussed in the next section. 

\section{Proposed Schemes for Interference-Free BC}
\label{sec:scheme}

The conventional \ac{5G} and Wi-Fi systems use \ac{OFDM} as the standard waveform \cite{farhang2016ofdm, ozdemir2007channel}, in which data symbols are placed in the frequency domain over the narrow band subcarriers. In the \ac{OFDM}-based \ac{SR}, \ac{BD} modulates the incoming signal by changing its amplitude, phase, or frequency shift. In the case of amplitude and phase modulation \ac{BD}, \ac{Rx} receives the primary signal with modified amplitude or phase, but data symbols in the primary signal coming from backscatter and direct path interfere. However, one way to avoid this interference is to shift the data in the primary signal in the frequency with \ac{FSK} at the \ac{BD}. For instance, if subcarriers exist in the \ac{OFDM} symbol, \ac{BD} can shift the primary data to those subcarriers to prevent the backscatter signal from the direct path signal interference and vice versa. By exploiting this approach, we propose the placement of null subcarriers within the \ac{OFDM} symbol for interference-free \ac{BC}. Let $\TotalBDSubcarriers$ and $\TotalDataSubcarriers$ denote the number of subcarriers for \ac{BD} and \ac{Rx}, respectively. To receive the \ac{BD} information with \ac{OOK}, $\TotalBDSubcarriers$ must be equal to $\TotalDataSubcarriers$; however, $\TotalBDSubcarriers$ must be greater than $\TotalDataSubcarriers$ to receive the \ac{BD} information with \ac{FSK}.

\subsection{Proposed Scheme for OOK Modulation}
\Ac{OOK} is an \ac{ASK} modulation in which the information is represented with two amplitudes, i.e., ON and OFF. In terms of \ac{BC}, \ac{OOK} is performed by switching the antenna between reflecting and non-reflecting states to cause a change in the backscatter signal \cite{devineni2018ambient}. In the proposed system, \ac{BD} also switches reflecting states at a different rate to cause a frequency shift in the backscatter signal while transmitting information bit $'1'$. The \ac{BD} waveform can be expressed as 
\begin{equation}
    \backscatterDataSignal (t) = 
    \begin{cases}
        0, & b=0\\
     e^{j2\pi \BDFreqShift t}, & b=1
    \end{cases}~.
    \label{Eq:bd_ook}
\end{equation}
where $\BDFreqShift=\zeta\Delta f$ with $\Delta f$ representing the subcarrier spacing in the \ac{OFDM} signal. Unlike the conventional \ac{OOK} modulation, the \ac{BD} shifts the data in the primary signal to transmit bit $'1'$ and remain constant for bit $'0'$. The \ac{Rx} detects the \ac{BD}'s information from the null subcarriers.

In the proposed approach for \ac{OOK}-based \ac{BC} without interference, null subcarriers are placed between adjacent data subcarriers, as illustrated in \figurename~\ref{fig:proposed-schemes}. The subcarrier allocation for primary data and \ac{BC} at the \ac{BS} can be expressed as  
\begin{equation}
    X[k] = 
    \begin{cases}
        \textbf{\primaryDataSymbol}[m], & k = 2m\\
        0, & \mathrm{otherwise}
    \end{cases}~,
\end{equation}
where $m\in\mathbb{Z}^+$ with $m\leq\frac{\TotalSubcarriers}{2}-1$, and  $\textbf{\primaryDataSymbol}[m]=[S[0],S[1],\cdots,S[\TotalDataSubcarriers-1]]$ with $\TotalDataSubcarriers=\frac{\TotalSubcarriers}{2}$ denotes the primary data symbol. In the \ac{OOK} scheme, the \ac{BD} shifts the data in the incoming signal with frequency $\BDFreqShift$ with $\zeta = 1$ to transmit bit $'1'$ over the null subcarrier. The \ac{Rx} receives the \ac{BD} bits and primary data symbols over null subcarriers and data subcarriers, respectively, as illustrated in  \figurename~\ref{fig:proposed-schemes}. Thus, the \ac{OOK} scheme prevents interference from the primary signal to the \ac{BD} signal at \ac{Rx}. Here, we mentioned only single data and corresponding null subcarrier placement. However, different arrangements can be made to allocate more than one data subcarrier together. For instance, two data subcarriers followed by two null subcarriers can also be placed to achieve the \ac{BC} without interference at the \ac{Rx}. In that case, a general subcarrier allocation at the \ac{BS} for \ac{OOK} modulation can be expressed as  
\begin{equation}
    X[k] = 
    \begin{cases}
        \textbf{\primaryDataSymbol}[m/\zeta], & k \in m\\
        0, & \mathrm{otherwise}
    \end{cases}~,
    \label{Eq:OOK-2}
\end{equation}
where $\zeta\in [2,4,\cdots,\frac{\TotalSubcarriers}{2}]$ and subcarrier index $m=\{0,\zeta, \zeta+2, \cdots, \frac{\TotalSubcarriers}{2}-\zeta\}$. Now, to prevent direct link interference when two data subcarriers are placed next to each other, the \ac{BD} shifts the signal with frequency $\zeta=2$ to transmit bit $'1'$. 

\begin{figure}
    \centering
    \includegraphics[width=\linewidth]{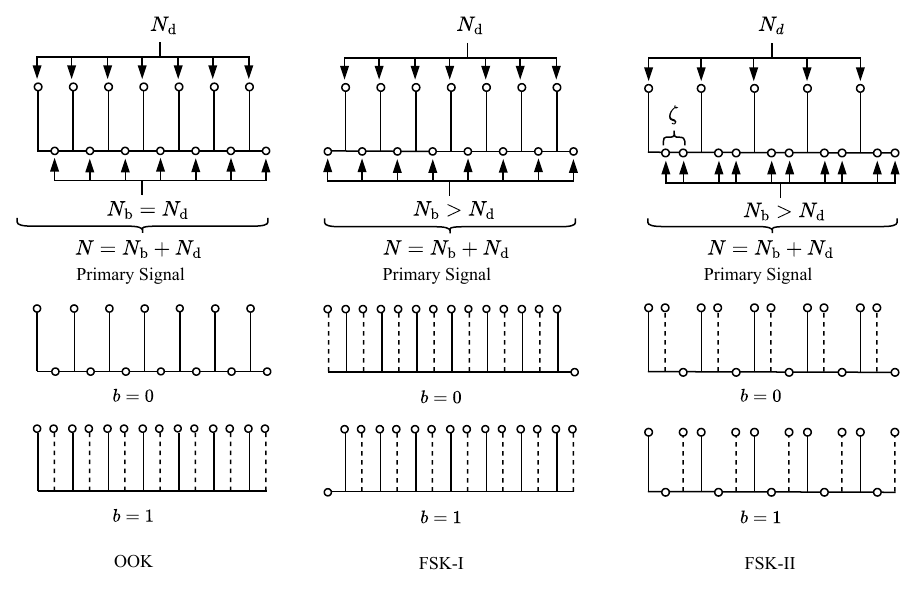}
    \caption{The proposed OOK, FSK-1 and FSK-2 schemes. The OOK scheme necessitates an equal amount of null and data subcarriers, and the incoming signal is selectively shifted solely to transmit the bit $'1'$. FSK-1 relies on the single-null subcarrier at the edges of the band, while FSK-2 utilizes a larger number of null subcarriers to improve reliability at the expense of reduced spectral efficiency.}
    \label{fig:proposed-schemes}
\end{figure}

\subsection{Proposed Schemes for FSK Modulation}

In this study, we consider two \ac{FSK} strategies discussed in the following subsections.
\subsubsection{FSK-1: Dual-Side Single-Null Subcarrier-Based FSK}
\label{sec:fsk-1}

Let one \ac{BD} bit duration span one \ac{OFDM} symbol duration, then \color{black} the backscatter modulation in \ac{FSK}-1 approach can be expressed as
\begin{equation}
    \backscatterDataSignal [n] = 
    \begin{cases}
        e^{-\frac{j2\pi f_1 n}{\TotalSubcarriers}}, & b=0\\
        e^{\frac{j2\pi f_1 n}{\TotalSubcarriers}}, & b=1
    \end{cases}~.
\end{equation}
 To enable interference-free  \ac{BC} with \ac{FSK}-1 approach, the null subcarriers are allocated between the data subcarriers such that a null subcarrier is placed before the data subcarrier so the \ac{BD} can shift the signal to different frequencies and the \ac{Rx} can detect both the direct link and backscatter signal information at different subcarriers allocated to each of them. According to \ac{FSK}-1, the data symbols are mapped to the subcarriers as
    \begin{equation}
    X[k] = 
    \begin{cases}
        \primaryDataSymbol[m], & k = 2m\\
        0,\; & \mathrm{otherwise}
    \end{cases}~,
\end{equation}
where $m\in\mathbb{Z}^+$ with $m\leq(\TotalSubcarriers/2)-2$. Thus, one subcarrier adjacent to each data subcarrier is left as a null subcarrier.

Upon receiving the \ac{OFDM} signal with this subcarrier allocation, \ac{BD} shifts the signal as follows: If the transmit information bit is $'0'$, \ac{BD} shifts the data subcarriers to the previous null subcarrier. If it transmits $'1'$, \ac{BD} shifts the data subcarriers to the next null subcarriers, as shown in \figurename~\ref{fig:proposed-schemes}. The \ac{FSK}-1 approach is applicable only for the binary \ac{FSK}  modulation because more null subcarriers are needed between the data subcarriers to enable a higher-order \ac{FSK} without direct link interference. For instance, if the \ac{BD} shifts the incident signal with a frequency equal to two subcarriers, the first data subcarrier interferes with the subsequent data subcarrier. 

\subsubsection{FSK-2: Single-Side Multi-Null Subcarrier-Based FSK}
To improve the reliability of {FSK}-based \ac{BC}, we introduce \ac{FSK}-2 that can use multiple null subcarriers, as shown in \figurename~\ref{fig:proposed-schemes}. With \ac{FSK}-2, the data symbols are mapped to the subcarriers as
   \begin{equation}
    X[k] = 
    \begin{cases}
        \primaryDataSymbol[1], & k =1 \; \mathrm{and} \;  
        m=1\\
        \primaryDataSymbol[m], & k = 
         2m+\zeta \; \mathrm{and}\; m>1 \\
        0,\; & \mathrm{otherwise}
    \end{cases}~,
\end{equation}
where $\zeta$ is the number of null subcarriers between two adjacent data subcarriers and $m\leq\lfloor(\TotalSubcarriers-1)/(\zeta+1)\rfloor$ for $m\in\mathbb{Z}^+$. For binary \ac{FSK}, two null subcarriers are allocated between two adjacent data subcarriers (i.e., $\zeta=2$), and the \ac{BD} modulates the incident signal by shifting the data symbols to the subsequent subcarriers. For instance, to transmit the information bit $'0'$, the data symbol is shifted by one subcarrier spacing. On the other hand, the data symbol is shifted by the twice subcarrier spacing to transmit information bit $'1'$, as shown in \figurename~\ref{fig:proposed-schemes}. Moreover, the value of $\zeta$ can be adjusted to achieve a higher order \ac{FSK} modulation and multiple \ac{BD} access. {\color{black}The backscatter modulation in  \ac{FSK}-2 approach can be expressed as}
\begin{equation}
    \backscatterDataSignal [n] = 
    \begin{cases}
        e^{\frac{j2\pi f_1 n}{\TotalSubcarriers}}, & b=0\\
        e^{\frac{j2\pi f_2 n}{\TotalSubcarriers}}, & b=1
    \end{cases}~.
\end{equation}
It is worth noting that \ac{FSK}-1 is spectrally efficient compared to \ac{FSK}-2. However,  it is less reliable if the detector at the RX relies on the single-null subcarrier at the band's edges. \ac{FSK}-2 utilizes a larger number of null subcarriers than \ac{FSK}-1. The inclusion of these additional null subcarriers significantly enhances the reliability of \ac{BC} in the \ac{FSK}-2 scheme, as shown in Section~V. 

\section{Non-Coherent Detection and Performance Analysis}
\label{sec:detection}
The \ac{Rx} receives the signal expressed in  \eqref{Eq:prx_received}. After removing the \ac{CP} and taking the $\TotalSubcarriers$-point \ac{DFT}, the received signal can be written as
\begin{equation}
    \begin{aligned}
    \ReceivedSignalFreq[k] = & \DreceivedDirectSignalFreq[k]+\DreceivedBDSignalFreq[k]+\noiseSignalFreq[k]\\
    = &\directChannelFreq[k]\primaryOFDMSignal[k]+\reflectionCoefficient\forwardChannelFreq[k]\backwardChannelFreq[k]\primaryOFDMSignal[k]\BDdataFreq[k] +\noiseSignalFreq[k]~.
    \end{aligned}
\end{equation}
where $\ReceivedSignalFreq[k]$ denotes the received symbol on the $k$-th subcarrier with $\DreceivedDirectSignalFreq[k]$,  $\DreceivedBDSignalFreq[k]$, $\forwardChannelFreq[k]$, $\backwardChannelFreq[k]$ and $\noiseSignalFreq [k]$ representing   $\DTreceivedDirectSignal[n]$, $\DTreceivedBDSignal[n]$, $\forwardChannel[n]$, $\backwardChannel[n]$ and $\noiseSignalTime[n]$  in the frequency domain, respectively. $\BDdataFreq[k]$ is the Fourier transform of $\backscatterDataSignal[n]$.
Since the data from the \ac{BD} and \ac{BS} are received at the different subcarriers, we can write the received signal from \ac{BS} and \ac{BD} separately as
\begin{equation}
    \begin{aligned}
        \hat{Y}_\mathrm{d} [\hat{k}] &= \ReceivedSignalFreq[\hat{k}]= \directChannelFreq[\hat{k}]\primaryOFDMSignal[\hat{k}]+\noiseSignalFreqDirect[\hat{k}]~,
    \end{aligned}
\end{equation}
and 
\begin{equation}
    \begin{aligned}
        \hat{Y}_\mathrm{b} [\tilde{k}] = &\ReceivedSignalFreq[\tilde{k}]=\reflectionCoefficient\forwardChannelFreq[\tilde{k}]\primaryOFDMSignal[\tilde{k}]\backwardChannelFreq[\tilde{k}]+\noiseSignalFreqBD[\tilde{k}]~,
    \end{aligned}
    \label{Eq:BDrecvSignalSep}
\end{equation}
where $\hat{k} \in \dataSubcarrierSet$ and $\tilde{k}\in \nullSubcarrierSet$ represent the index of subcarriers containing data of the \ac{BS} and \ac{BD}, respectively. According to the the \ac{BD}'s modulation, we can express  $\primaryOFDMSignal[\tilde{k}]$ as 
\begin{equation}
\primaryOFDMSignal[\tilde{k}]=
    \begin{cases}
        0, &b[n]=0\\
        \reflectionCoefficient\primaryOFDMSignal_d[k+\zeta], &b[n]=e^{\frac{-j2\pi\BDFreqShift n}{\TotalSubcarriers}}\\
        \reflectionCoefficient\primaryOFDMSignal_d[k-\zeta], &b[n]=e^{\frac{j2\pi\BDFreqShift n}{\TotalSubcarriers}}
    \end{cases}~.
\end{equation}

\subsection{Non-coherent Detector for OOK}
In the case of binary transmission from the \ac{BD} with \ac{OOK} modulation, $\hat{\ReceivedSignalFreq}_\mathrm{b} [\tilde{k}]$ can be expressed as
\begin{equation}
        \hat{Y}_\mathrm{b} [\tilde{k}] \triangleq 
    \begin{cases}    
    \noiseSignalFreqBD[\tilde{k}],& b = 0\\
    \DreceivedBDSignalFreq [\tilde{k}]+\noiseSignalFreqBD[\tilde{k}],& b = 1
  \end{cases}~,
  \label{Eq:BDreceivedSignal}
\end{equation}
where the $\noiseSignalFreqBD[\tilde{k}]\sim\mathcal{CN}(0,\sigma_{\noiseSignalFreqBD}^2)$. For a large value of $N$, the \ac{OFDM} signal backscatter by \ac{BD} with \ac{OOK} $\DreceivedBDSignalFreq [\tilde{k}]$ is a sequence of random variables and follows the circularly-symmetric complex Gaussian distribution with zero mean and variance $\sigma_\mathrm{Y_b}^2 = |\reflectionCoefficient|^2||H_b|^2\sum_{\tilde{k}=0}^{N_\mathrm{b}-1}|H_f(\tilde{k})|^2$. 
Without the loss of generality, we assume here that $\zeta = 1$. 
$\TotalBDSubcarriers$, and $\TotalDataSubcarriers$ represent the total number of null subcarriers and data subcarriers in an \ac{OFDM} symbol transmitted by the \ac{BS}, respectively. After receiving the signal from \ac{BS} and \ac{BD}, the \ac{Rx} must detect these signals to retrieve the information. A non-coherent detector is used at \ac{Rx} to detect the signal of \ac{BD} while considering that the data of the \ac{Rx} is detected in a conventional manner using a coherent detector because both devices' data are received over the orthogonal subcarriers. Since the non-coherent detector does not require phase information, it can be used in different types of \ac{SR} systems, like a mutualistic \ac{SR} system in which BS and \ac{BD} have a common \ac{Rx} or an ambient \ac{SR} system where they have separate \acp{Rx} (i.e., user equipment and ambient \ac{Rx}). Comparatively, in coherent detection, \ac{Rx} must know  $\directChannelFreq$, $\forwardChannelFreq$, and $\directChannelFreq$, which is suitable for an \ac{SR} with a common \ac{Rx} but not for an \ac{SR} with an ambient \ac{Rx} \cite{liang2020symbiotic}.


To detect the \ac{OOK} signal, \ac{Rx} only needs to know the placement of the data and null-subcarriers in the transmitted \ac{OFDM} signal without $\primaryDataSymbol[\hat{k}]$, $h_\mathrm{d}$, $h_\mathrm{f}$, and $h_\mathrm{b}$. The information in $\BDdataFreq[\tilde{k}]$ transmitted by shifting or not shifting the $\primaryDataSymbol[\hat{k}]$ from $\primaryOFDMSignal[\hat{k}]$ to $\primaryOFDMSignal[\tilde{k}]$ can be acquired by detecting the energy at $\primaryOFDMSignal[\tilde{k}]$ subcarriers.  Let $\mathcal{H}_0$ be the hypothesis for the likelihood of detecting the signal (i.e., $b=1$) at the receiver when no signal was transmitted (i.e., $b=0$). This situation occurs when the received signal energy exceeds a predefined threshold. On the other hand, let $\mathcal{H}_1$ be the hypothesis for the likelihood of detecting no signal, i.e., $b=0$ when the signal was transmitted, i.e., $b=1$. This situation occurs when the received signal energy is below a threshold. To test these two hypotheses, we define a test statistic $r$ based on $\hat{Y}_\mathrm{b}[\tilde{k}]$ 
\begin{equation}
    r = \sum_{\tilde{k}=0}^{\TotalBDSubcarriers-1}\left|\hat{Y}_\mathrm{b}[\tilde{k}]\right|^2~. 
\end{equation}
We can also write the $r$ according to the information transmitted by \ac{BD} as
\begin{equation}
    r = 
    \begin{cases}    
    \sum_{\tilde{k}=0}^{\TotalBDSubcarriers-1}|\noiseSignalFreqBD[\tilde{k}]|^2,& \backscatterDataSignal=0\\
    \sum_{\tilde{k}=0}^{\TotalBDSubcarriers-1}|\forwardChannelFreq[\tilde{k}]\primaryOFDMSignal[\tilde{k}]\backwardChannelFreq[\tilde{k}]+\noiseSignalFreqBD[\tilde{k}]|^2,& \backscatterDataSignal = 1~
  \end{cases}~.
\end{equation}
Under $\mathcal{H}_0$ when \ac{BD} transmits bit $'0'$, the subcarriers allocated for \ac{BC} only contain noise, and $r=r_0$ is the sum of the square of the $2N_b$ Gaussian random variables. We need to find the distribution of $\probability{r|\backscatterDataSignal=0}$ by considering that the square of the sum of two Gaussian random variables (i.e., the real and imaginary part of the noise signal)  follows the exponential distribution. However, to find the distribution of the sum of $N_\mathrm{b}$ exponential random variables, we need to perform $N_\mathrm{b}-1$ convolution operation equivalent to taking $N_\mathrm{b}-1$ integrals, which is a highly complex undertaking. As $W_b$ is a circularly symmetric complex Gaussian random variable, the characteristic function of $|W_b|^2$ is $\left( 1-it \lambda^{-1}\right)^{-1}$, i.e., the Fourier transform of its \ac{PDF}. For the given $r=r_{0}$, we can find the characteristic function as
\begin{equation}
   \characfun{r_{0}}{t} = E\left[e^{itx}\right]=\prod_{\tilde{k}=1}^{N_b}\frac{1}{1-it\lambda_{\tilde{k}}^{-1}}~,
\end{equation}
where $\lambda_{\tilde{k}}^{-1}=2 \sigma_{w,\tilde{k}}^2$. Then, the \ac{PDF} of $\probability{r|B=0}$ is obtained from characteristic function as 
\begin{equation}
    \pdf{r_{0}}{x} = \int_{-\infty}^\infty \frac{1}{2\pi j t}\characfun{r_{0}}{t}e^{-jtz}dt~.
    \label{Eq:pdfr0}
\end{equation}

On the other hand, under $\mathcal{H}_1$, the distribution of $\probability{r|\backscatterDataSignal=1}$ becomes complicated because of the dependence on the cascaded channels $h_f[n]$ and $h_b[n]$. For a given $h_b[n]$, the \ac{PDF} of the cascaded backscatter channel is derived in Lemma \ref{lemma1}.

\begin{lemma}
\label{lemma:1}
    Let $h_b[n]$ have a single path and have a constant value during one \ac{OFDM} symbol due to the short distance between \ac{BD} and \ac{Rx}. Then, considering $v=|h_b|$ the \ac{PDF} of $r=r_1$ becomes 
    \begin{equation}
        \pdf{r_1|v}{x} = \int_{-\infty}^\infty \frac{1}{2\pi j t}\characfun{r_1|v}{t}e^{-jtz}dt~,
        \label{Eq: pdfforward}
    \end{equation}
    where $\characfun{r_1}{t}=\left(1-it\lambda_{\tilde{k}}^{-1}\right)^{-1}$ is the characteristic function, and $\lambda_{\tilde{k}}^{-1}=2\reflectionCoefficient v^2\sum_{\tilde{k}=0}^{\TotalBDSubcarriers-1}\sigma_{h,\tilde{k}}^2+ \sigma_{w}^2$~.
    \label{lemma1}
\end{lemma}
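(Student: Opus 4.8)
The plan is to condition on the single-path backward-channel magnitude $v = |h_b|$ and characterize the distribution of each frequency-domain sample $\hat{Y}_\mathrm{b}[\tilde{k}]$ under $\mathcal{H}_1$, then build up the distribution of the energy statistic $r_1 = \sum_{\tilde{k}=0}^{\TotalBDSubcarriers-1}|\hat{Y}_\mathrm{b}[\tilde{k}]|^2$ term by term, exactly mirroring the treatment of $r_0$ given above \eqref{Eq:pdfr0}.

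First I would argue that, given $v$, the signal component $\reflectionCoefficient\forwardChannelFreq[\tilde{k}]\primaryOFDMSignal[\tilde{k}]\backwardChannelFreq[\tilde{k}]$ is circularly symmetric complex Gaussian: the data symbol $\primaryOFDMSignal[\tilde{k}]$, the reflection coefficient $\reflectionCoefficient$, and the conditioned backward gain (magnitude $v$) are deterministic, while the forward-link frequency response $\forwardChannelFreq[\tilde{k}]$ is a zero-mean \ac{CSCG} random variable under Rayleigh fading. A deterministic scaling of a \ac{CSCG} variable is again \ac{CSCG}, and adding the independent noise $\noiseSignalFreqBD[\tilde{k}]$ preserves this, so $\hat{Y}_\mathrm{b}[\tilde{k}]$ is zero-mean \ac{CSCG} with per-subcarrier variance $|\reflectionCoefficient|^{2}v^{2}|\primaryOFDMSignal[\tilde{k}]|^{2}\sigma_{h,\tilde{k}}^{2}+\sigma_w^{2}$.

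Next, since $|\hat{Y}_\mathrm{b}[\tilde{k}]|^{2}$ is the squared magnitude of a zero-mean \ac{CSCG} variable, it is exponentially distributed, whose characteristic function is the single pole $(1-it\lambda_{\tilde{k}}^{-1})^{-1}$ with rate parameter fixed by that variance. Invoking the approximate independence of the forward-link frequency samples across the null subcarriers (valid for large $\TotalSubcarriers$ and sufficient delay spread, as already used for the \ac{CSCG} model of $\DreceivedBDSignalFreq$ in the preceding subsection), the characteristic function of the sum $r_1$ factorizes into the product of the per-subcarrier characteristic functions, and the per-subcarrier variances aggregate into the single rate parameter $\lambda_{\tilde{k}}^{-1}$ stated in the lemma. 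Finally, inverting this characteristic function with the Fourier integral yields the conditional \ac{PDF} $\pdf{r_1|v}{x}$ in the claimed form of \eqref{Eq: pdfforward}.

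The main obstacle I expect is justifying the Gaussianity and the cross-subcarrier independence needed to collapse the energy statistic into a single exponential-type characteristic function. The terms $|\hat{Y}_\mathrm{b}[\tilde{k}]|^{2}$ are not strictly independent, since they are driven by the same forward- and backward-channel realizations; hence the factorization of $\characfun{r_1|v}{t}$ and the aggregation of the per-subcarrier variances into one rate parameter rest on the large-$\TotalSubcarriers$ \ac{CSCG} approximation for the backscattered \ac{OFDM} samples invoked earlier, rather than on an exact argument. Conditioning on $v$ is precisely what makes the forward-link term purely \ac{CSCG}, so the remaining averaging over the distribution of $v=|h_b|$ is deferred and handled outside this lemma.
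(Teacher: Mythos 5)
Your proposal follows essentially the same route as the paper's proof: recognize each squared-magnitude term in the energy statistic (conditioned on $v=|h_b|$) as an exponential random variable, form the characteristic function of the sum as a product of single-pole factors $\prod_{\tilde{k}}(1-it\lambda_{\tilde{k}}^{-1})^{-1}$, and recover the conditional PDF by Fourier inversion. You are in fact more explicit than the paper's appendix about where the approximations enter --- the circularly symmetric complex Gaussian model for the backscattered samples, the cross-subcarrier independence needed for the factorization, and the inclusion of the noise variance in $\lambda_{\tilde{k}}^{-1}$ --- all of which the paper's proof (which works directly with $u=\sum_m |h_f[m]|^2$ and omits the noise term) leaves implicit.
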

\begin{proof}
    The proof of Lemma \ref{lemma1} is given in Appendix \ref{prof:lemma:pdfOOK}.
\end{proof}

\paragraph{Error Performance}
In the present system design, we measure the performance in terms of the probability of error for the non-coherent detector, which is obtained by summing the \ac{PFA} and \ac{PMD}. The probability of error for given \ac{PFA} and \ac{PMD} is expressed as 
\begin{equation}
    \poe = \frac{1}{2} \pmd+\frac{1}{2}\pfa~.
\end{equation}
We can find the $\pmd$ and $\pfa$ for a given threshold value $\threshold$ as
\begin{equation}
\begin{aligned}
        \pfa(\threshold) &= \probability{r>\threshold|B=0}\\
         &= 1-\probability{r<\threshold|B=0}\\
         & = 1-\cdf{r}{\threshold}~,
\end{aligned}
\label{Eq:pfa}
\end{equation}
and,
\begin{equation}
\begin{aligned}
        \pmd(\threshold) &= \probability{r<\threshold|B=1}.\\
        &= \cdf{r}{\eta}~.
\end{aligned}
\label{Eq:pmd}
\end{equation}
where $\cdf{r}{\eta}$ is the \ac{CDF} of $r$, which can be calculated from \eqref{Eq:pdfr0} according to the inversion formula given in \cite{waller1995obtaining}:
\begin{equation}
    \cdf{r}{\threshold;v} = \frac{1}{2}-\int_{-\infty}^\infty \frac{1}{2\pi j t}\characfun{r_1|v}{t}e^{-jtz}dt~.
     \label{Eq:FrCDFv}
\end{equation}
The decision threshold should be chosen such that it minimizes $\poe$ as
\begin{equation}
    \tilde{\threshold} = \text{arg}\underset{\threshold}{\text{min}} \poe(\eta)~.
    \label{Eq:threshold}
\end{equation}
If we consider a fixed $\pfa$ and try to minimize $\pmd$, we can find  $\tilde{\threshold}$ that minimizes $\poe$. After fixing  $\pfa$, we search for a given $\sigma_{\noiseSignalFreqBD}^2$ by using a one-dimensional linear search to find the optimal threshold  \cite{elmossallamy2019noncoherenta}. Then, we use $\tilde{\threshold}$ to calculate the minimum $\poe$.   

Now, we can find the \ac{CDF} for $v$ as a Rayleigh random variable, which is written as
\begin{equation}
    \begin{aligned}
    \cdf{r}{\eta} &= \int_0^\infty \cdf{r}{\threshold;v} f(v)dv\\
    &= \int_0^\infty \left(\frac{1}{2}-\int_{-\infty}^\infty \frac{1}{2\pi j t}\characfun{r_1|v}{t}e^{-jtz}dt\right) f(v)dv\\
    &=\frac{1}{2}\int_0^\infty f(v)dv-\\&\;\;\;\;\int_0^\infty\int_{-\infty}^\infty \left(\frac{1}{2\pi j t}\characfun{r_1|v}{t}e^{-jtz}dt\right)f(v)dv~,
    \end{aligned}
    \label{Eq:FrCDFbackward}
\end{equation}
where $f(v)=\frac{v}{\sigma_v^2}e^{-\frac{v^2}{\sigma_v^2}}$. $\cdf{r}{\threshold;v}$ is the joint \ac{CDF} of $\threshold$ and $v$, which is obtained from the joint probability distribution of the forward and backward links that follow the  Rayleigh distribution. We aim to compute  $\pmd$ in \eqref{Eq:pmd} for \ac{BC} based on a given $\threshold$. In case the \ac{BD} is close to the \ac{Rx}, the backward link is a single line-of-sight path, $v$ is taken as constant, and \eqref{Eq:FrCDFv} can be used to calculate $\pmd$. Otherwise, we marginalize $\cdf{r}{\threshold;v}$ by integrating over $v$ to find $\cdf{r}{\eta}$. Finally, we calculate the $\poe$ using \eqref{Eq:pmd} and \eqref{Eq:pfa} for a given $\threshold$. As there is no closed-form solution of $\cdf{r}{\eta}$, we can compute the integrals using any software \cite{andrews2011tractable}.

\subsection{Non-Coherent Detector for FSK}
Let $\fskOneset$, and $\fskTwoset$ represent the set of null subcarriers dedicated for \ac{BD} bit $'0'$ and bit $'1'$, respectively. The RX then calculates metrics for information bits $'0'$ and $'1'$ as
\begin{align*}
    \ts{0} &\triangleq \sum_{\tilde{k}\in\fskOneset} |\ReceivedSignalFreq[\tilde{k}]|^2\\
    &= \sum_{\tilde{k}\in\fskOneset}\left|\directChannelFreq[\tilde{k}]\primaryOFDMSignal[\tilde{k}]+\forwardChannelFreq[\tilde{k}]\reflectionCoefficient\primaryOFDMSignal[\tilde{k}]\backwardChannelFreq[\tilde{k}]+\noiseSignalFreq[\tilde{k}]\right|^2~,
\end{align*}
and,
\begin{align*}
        \ts{1} &\triangleq \sum_{\tilde{k}\in\fskTwoset} |\ReceivedSignalFreq[\tilde{k}]|^2\\
        & = \sum_{\tilde{k}\in\fskTwoset}\left|\directChannelFreq[\tilde{k}]\primaryOFDMSignal[\tilde{k}]+\forwardChannelFreq[\tilde{k}]\reflectionCoefficient\primaryOFDMSignal[\tilde{k}]\backwardChannelFreq[\tilde{k}]+\noiseSignalFreq[\tilde{k}]\right|^2~.
\end{align*}
Finally, it detects the \ac{BD}'s information by comparing the values of $\ts{0}$ and $\ts{1}$ as
\begin{equation}
    \tilde{\backscatterDataSignal} = 
    \begin{cases}
        0, & \ts{0}>\ts{1}\\
        1, & \ts{0}<\ts{1}
    \end{cases}~,
    \label{Eq:detected_b}
\end{equation}
where $\tilde{b}$ is the detected \ac{BD} information. In particular, when $\ts{0}$ is greater than $\ts{1}$, the detected information is $'0'$ because the \ac{BD} shifts the primary signal to subcarrier set $\fskOneset$. However, when the \ac{BD} shifts the primary signal to $\fskOneset$, $\ts{0}$ is less than $\ts{1}$ and the \ac{Rx} decides the detected bit as $'1'$. The detection of the transmitted bit at the \ac{Rx} may be incorrect because the received signal is affected by both the channel conditions and the presence of noise. For instance, if \ac{BD} transmits the bit $'0'$, but the value of $\ts{0}$ at the \ac{Rx} is less than $\ts{1}$, the detected bit is decided to be $'1'$  and vice versa. 

The \ac{BER} of the non-coherent detector can be expressed as
\begin{equation}
    P_{\text{e}} = \frac{1}{2}\mathrm{Pr}(\tilde{b}=1|b=0)+\frac{1}{2}\mathrm{Pr}(\tilde{b}=0|b=1)~,
    \label{Eq:poe_fsk}
\end{equation}
where $\mathrm{Pr}(\tilde{b}=1|b=0)$ represents the \ac{PMD} and $\mathrm{Pr}(\tilde{b}=0|b=1)$ denotes \ac{PFA}. We can express these probabilities as
\begin{flalign}
    \mathrm{Pr}(\tilde{b}|b) &\overset{\Delta}{=} 
    \begin{cases}
        \mathrm{Pr}(\ts{0}-\ts{1}\leq0), &b=0\\
        \mathrm{Pr}(\ts{0}-\ts{1}>0), &b=1
    \end{cases}~,\nonumber\\
    & = 
    \begin{cases}
        \cdf{\ts{0}-\ts{1}}{0}, \hfill &b=0\\
        1-\cdf{\ts{0}-\ts{1}}{0}, &b=1
    \end{cases}~,
    \label{Eq:Pr_fsk}
\end{flalign}
Finding the distributions of $\ts{0}$ and $\ts{1}$ become complicated because of the dependence on the cascaded channels $h_f[n]$ and $h_b[n]$. For a given $h_b[n]$, the \ac{PDF} of the cascaded backscatter channel can be found according to Lemma \ref{lemma1}. Moreover,  $\cdf{\ts{0}-\ts{1}}{0}$ denotes \ac{CDF} of $\ts{0}-\ts{1}$, which is obtained as follows:
\begin{theorem}
\rm
The \ac{CDF} of $\ts{0}-\ts{1}$ can be calculated as:
\begin{equation}
    \cdf{\ts{0}-\ts{1}}{0} = \int_0^\infty \cdf{\ts{0}-\ts{1}}{0;v} f(v)\text{d}v~,
     \label{Eq:cdf_fsk}
\end{equation}
where 
\begin{equation}
    \cdf{\ts{0}-\ts{1}}{0;v} = \frac{1}{2} - \int_{-\infty}^{\infty} \frac{\characfun{\ts{0}|v}{t}\characfun{\ts{1}|v}{t}^*}{2\pi jt}  {\text{d}}t~,
    \label{Eq:cdf_fska}
\end{equation}
where $v = |\backwardChannel|$ is the backscatter channel gain with $f(v)=(v/\sigma_v^2)\mathrm{exp}(-v^2/\sigma_v^2)$. The characteristic functions $\characfun{\ts{0}|v}{t}$, and  $\characfun{\ts{1}|v}{t}$ are given by
\begin{equation}
        \characfun{\ts{0}|v}{t}=\frac{1}{1-it\lambda_{\tilde{k}}^{-1}}~,
\end{equation}
and, 
\begin{equation}
    \characfun{\ts{1}|v}{t}=\frac{1}{1-it\lambda_{\tilde{k}}^{-1}}~,
\end{equation}
respectively, for $\lambda_{\tilde{k}}^{-1} = 2\reflectionCoefficient v^2\sum_{\tilde{k}=0}^{\TotalBDSubcarriers-1}\sigma_{h,\tilde{k}}^2+ \sigma_{w}^2$~.
\label{theorem1}
\end{theorem}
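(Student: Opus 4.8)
The plan is to compute the CDF of the decision variable $D\triangleq\ts{0}-\ts{1}$ at the decision boundary $0$ by first conditioning on the backward-link gain $v=|\backwardChannel|$, assembling the characteristic function of $D$ from those of $\ts{0}$ and $\ts{1}$, and finally averaging over the Rayleigh-distributed $v$ by the law of total probability, which immediately yields \eqref{Eq:cdf_fsk}.

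First I would argue that, conditioned on $v$, the two energy metrics $\ts{0}$ and $\ts{1}$ are statistically independent. This holds because they are accumulated over the \emph{disjoint} null-subcarrier sets $\fskOneset$ and $\fskTwoset$, and once the common backward gain $v$ is fixed, all remaining randomness---the forward channel $\forwardChannelFreq[\tilde{k}]$, the direct channel, the primary symbols, and the noise $\noiseSignalFreq[\tilde{k}]$---is mutually independent across subcarriers. Each summand is then the squared magnitude of a conditionally complex-Gaussian quantity, so Lemma~\ref{lemma1} directly supplies the conditional characteristic functions $\characfun{\ts{0}|v}{t}$ and $\characfun{\ts{1}|v}{t}$ in the stated rational form $1/(1-it\lambda_{\tilde{k}}^{-1})$, with $\lambda_{\tilde{k}}^{-1}=2\reflectionCoefficient v^2\sum_{\tilde{k}=0}^{\TotalBDSubcarriers-1}\sigma_{h,\tilde{k}}^2+\sigma_w^2$.

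Next I would build the characteristic function of the difference. By the conditional independence just established, it factorizes as $\characfun{D|v}{t}=\characfun{\ts{0}|v}{t}\,\characfun{-\ts{1}|v}{t}$. Since $\ts{1}$ is real and nonnegative, negation maps to reflection of the characteristic function, $\characfun{-\ts{1}|v}{t}=\characfun{\ts{1}|v}{-t}=\characfun{\ts{1}|v}{t}^{*}$, which is exactly the conjugated second factor appearing in \eqref{Eq:cdf_fska}. I would then invoke the inversion formula of \cite{waller1995obtaining}---the same one used in \eqref{Eq:FrCDFv}---to recover the conditional CDF, evaluated at the threshold $0$ so that the kernel $e^{-jtz}$ collapses to unity, giving $\cdf{\ts{0}-\ts{1}}{0;v}=\tfrac12-\int_{-\infty}^{\infty}\frac{\characfun{\ts{0}|v}{t}\characfun{\ts{1}|v}{t}^{*}}{2\pi jt}\,\text{d}t$.

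The main obstacle is the combined independence-and-conjugation step: one must carefully verify that fixing $v$ genuinely decouples the two metrics across $\fskOneset$ and $\fskTwoset$, and that the negation of $\ts{1}$ corresponds precisely to conjugating its characteristic function rather than introducing spurious cross terms. The remaining pieces are routine: the inversion integral converges because the product of two first-order rational characteristic functions decays like $1/t^{2}$ for large $|t|$, which in turn justifies interchanging the order of integration in the final marginalization over $f(v)=(v/\sigma_v^2)\exp(-v^2/\sigma_v^2)$ to obtain \eqref{Eq:cdf_fsk}.
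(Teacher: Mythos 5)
Your proposal is correct and follows essentially the same route as the paper's Appendix~B: obtain the conditional characteristic functions of $\ts{0}$ and $\ts{1}$ given $v$ (as products of first-order rational factors from the sum of exponentials), invert to get $\cdf{\ts{0}-\ts{1}}{0;v}$, and marginalize over the Rayleigh-distributed $v$. If anything, you are more complete than the paper, which spends its proof re-deriving the characteristic function of $\sum_m|h_f[m]|^2$ and simply asserts the product form $\characfun{\ts{0}|v}{t}\characfun{\ts{1}|v}{t}^*$, whereas you explicitly justify it via conditional independence over the disjoint sets $\fskOneset$, $\fskTwoset$ and the reflection identity $\characfun{-\ts{1}|v}{t}=\characfun{\ts{1}|v}{t}^*$.
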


\begin{proof}
    The proof of Theorem \ref{theorem1} is given in Appendix \ref{proof:theorem:pdffsk}.
\end{proof}

The calculations of \eqref{Eq:cdf_fsk} and \eqref{Eq:cdf_fska} are valid for the cascaded backscatter channel with forward and backward links following the Rayleigh distributions. We marginalize \eqref{Eq:cdf_fska} over $v$ to find $\cdf{\ts{0}-\ts{1}}{0}$, and compute $\mathrm{Pr}(\tilde{b}|b)$ in \eqref{Eq:poe_fsk}. Although $\cdf{\ts{0}-\ts{1}}{0}$ in \eqref{Eq:cdf_fsk} is not a closed-form expression, it is an analytical solution that can be evaluated via numerical integration, and  $\poe$ can be calculated using \eqref{Eq:poe_fsk} accordingly \cite{sahin2023reliable}.

\begin{figure}
    \centering
    \includegraphics[width=\linewidth]{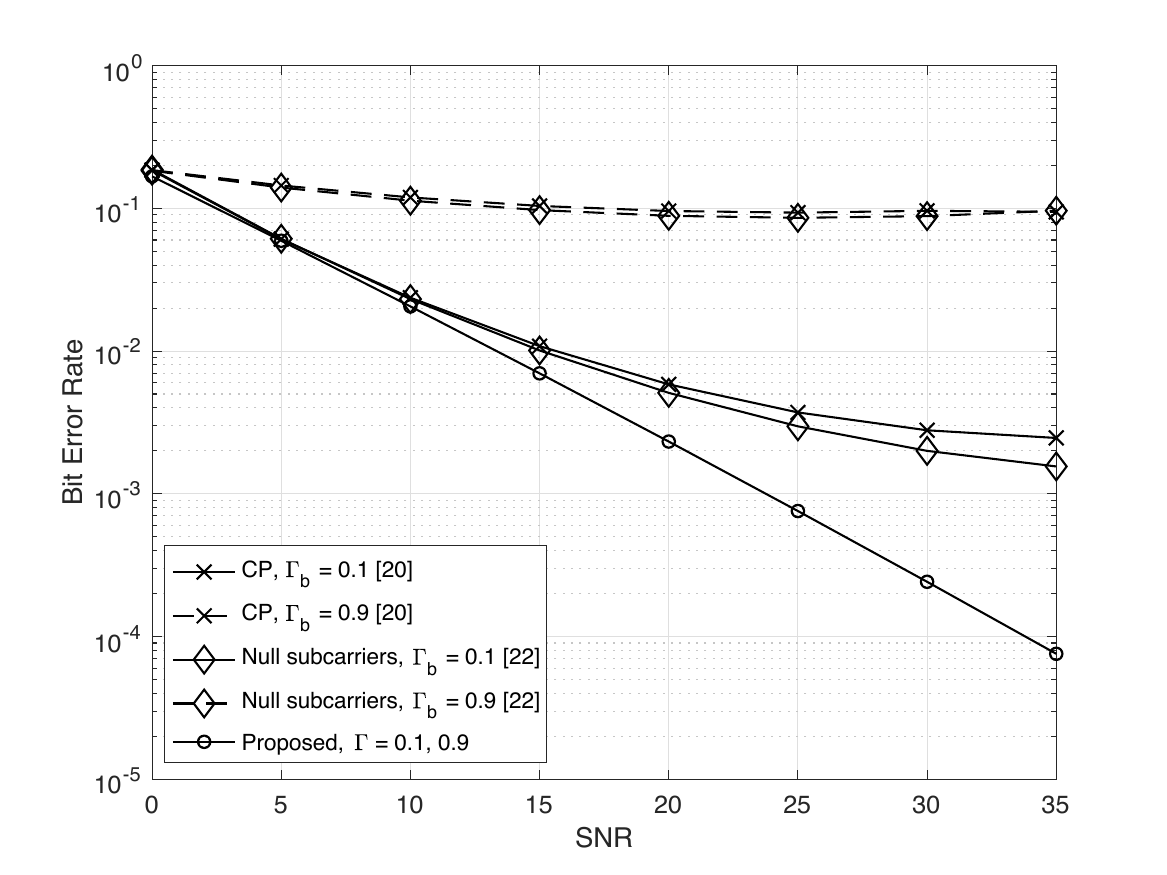}
    \caption{A comparison of the primary communication performance of the proposed scheme with the CP-based and null-subcarrier-based BC schemes in \cite{yang2016backscatter}, and \cite{ elmossallamy2018backscatter}, respectively, when $\reflectionCoefficient \in \{0.1,0.9\}$ and $\TotalSubcarriers = 64$.}
    \label{fig:comparison}
\end{figure}

\subsection{Compatibility and Integration}    
The proposed OFDM-based Symbiotic Radio scheme can be seamlessly integrated with existing communication infrastructure, as the primary signal design is based on established wireless standards such as Wi-Fi, LTE, and 5G, enabling direct compatibility and coexistence \cite{farhang2016ofdm, zhu20193}. 
Additionally, the proposed \ac{OFDM}-based \ac{SR} scheme can be extended to support multiple \acp{BD} using frequency-domain multiplexing. Moreover, the hardware design considerations for the \ac{BD} are discussed in Section II-B, where it is noted that conventional BDs employ rectangular pulse modulation, which can lead to undesirable out-of-band emissions. To mitigate this issue, continuous load modulation controlled by a microcontroller can be used to facilitate the synthesis of complex signals for frequency modulation \cite{kimionis2016pulse}. Furthermore, alternative \ac{BD} architectures have been proposed that can generate single-sideband \ac{FSK} signals without higher-order harmonics \cite{ding2020harmonic, alhassoun2023spectrally}. For the modulation over the primary signal, both the \ac{BD} and the \ac{Rx} must have prior knowledge of the \ac{BS} transmission and be synchronized accordingly. As the Rx is a conventional user in our case, it can get the necessary information about the primary signal and can synchronize with the BS using control signals such as primary synchronization signal (PSS) and secondary synchronization signal (SSS) in LTE and 5G standards. However, the BD may require additional control signaling from the BS and dedicated synchronization mechanisms, such as the example discussed in \cite{feng2023heartbeating}, where the \ac{BD} utilizes the periodicity of LTE signals for synchronization. 

While the proposed approaches facilitate effective interference cancellation and enhance the overall performance of the \ac{OFDM}-based \ac{SR} with a single \ac{BD}, there is significant potential to extend these methods to networks involving multiple \acp{BD}. In such scenarios, the \ac{Rx} must contend with both direct link interference and mutual interference among \acp{BD}. To address these challenges, after appropriate null-subcarrier allocation, each \ac{BD} can shift the primary signal to a distinct subcarrier using the proposed \ac{OOK}, \ac{FSK}-I, and \ac{FSK}-II modulation schemes. 

Although this strategy fosters effective interference cancellation and improves overall network performance, there remains considerable scope for further exploration, for example, under high-mobility environments. The Doppler effect presents challenges for traditional \ac{OFDM} signals, disrupting subcarrier orthogonality and potentially increasing interference. Additionally,  synchronization is a challenging problem in time-varying channels not only for \ac{BC} but also for primary communication. Due to this limitation, our proposed approach may not be directly applicable to high-mobility scenarios when relying on traditional \ac{OFDM} signals. In our future studies, we aim to design a robust receiver architecture that can dynamically adapt to varying conditions in both single \ac{BD} and dense \ac{BD} scenarios. Furthermore, we intend to incorporate adaptive synchronization schemes that take into account the challenges presented by high-mobility environments \cite{chen2009adaptive}.


Lastly, the proposed schemes are presented with a focus on a single-user \ac{OFDM}-based \ac{SR} system to provide a clearer understanding of their benefits. To implement the proposed scheme in a multi-user scenario, we maintain consistent null-subcarrier allocation while assigning subcarriers to primary users based on their specific needs, similar to conventional \ac{OFDM} approaches. The \ac{Rx} only needs to detect the null-subcarrier allocation to identify the \ac{BD}’s signal, and it does not require knowledge of the data on the primary data subcarriers, as our schemes facilitate non-coherent detection. Thus, in a multi-user scenario, the primary subcarrier allocation differs from that in a single-user scenario while ensuring the functionality of our proposed schemes.

\section{Numerical Results}
\label{sec:results}
In this section, we evaluate the proposed scheme numerically. We first study the performance of the primary user in \ac{OFDM}-based \ac{SR} in the presence of \ac{BC} and compare it with the techniques in \cite{yang2016backscatter, elmossallamy2018backscatter}. The baseline scheme in \cite{elmossallamy2018backscatter} utilizes the guard subcarriers for the \ac{BC}, which are left in the \ac{OFDM} symbol in the LTE standard for interference mitigation. Furthermore, the number of guard subcarriers is selected according to the channel bandwidth and subcarrier spacing. For instance, $64$ subcarriers are left in the \ac{LTE} with a $10$ MHz bandwidth \cite{ghosh2010fundamentals}. Since the primary signals in \cite{yang2016backscatter} and \cite{elmossallamy2018backscatter} are not designed symbiotically, they cause interference to data subcarriers, resulting in increasing \ac{BER} for the primary communication. We consider a primary system with \ac{BPSK} modulation for $\TotalSubcarriers\in\{64,128,256,512\}$, and $\CPlength = N/8$. The information at the \ac{BD} is modulated with binary \ac{FSK} modulation. We consider various reflection coefficients with $\reflectionCoefficient \in \{0.1,0.25,0.50,0.75,0.9,1\}$, and the value of $\tilde{\threshold}$ is selected according to \eqref{Eq:threshold} assuming  $\pfa=10^{-3}$. We assume that the direct link, the backward link, and the forward link follow a Rayleigh fading distribution.

In \figurename~\ref{fig:comparison}, we analyze the \ac{BER} performance of different schemes for $\reflectionCoefficient \in \{0.1,0.9\}$ and $\TotalSubcarriers = 64$. The \ac{BER} of both \ac{CP}-based and guard-subcarriers-based approaches in \cite{yang2016backscatter},  \cite{elmossallamy2019noncoherenta}, and \cite{elmossallamy2018backscatter} increases as the value of $\reflectionCoefficient$ increases. For instance, the \ac{BER} of these schemes increases from $10^{-4}$ to $10^{-3}$ at a \ac{SNR} of $30$ dB due to the interference from \ac{BC}. However, for $\reflectionCoefficient\ge0.9$, the \ac{BD} reflects the majority of the received power, and the \ac{BER} for the primary communication in earlier schemes increases drastically. Comparatively, the proposed approaches for \ac{OFDM}-based \ac{SR} prevent the interference from \ac{BC} to primary communication by strategically placing null subcarriers between data subcarriers. \tablename~\ref{tab:01} summarizes the key differences between the proposed approach and the baseline approaches. 

\begin{table}[]
    \centering
        \caption{Comparison with Existing Methods}
    \label{tab:01}
    \begin{tabular}{|p{0.15\linewidth} | p{0.20\linewidth}|p{0.50\linewidth}|}
    \hline
        Methods & Backscatter Modulation & Key Differences \\ \hline
        Proposed & \ac{OOK}, \ac{FSK}-1, and \ac{FSK}-2 &  \ac{BC} over dedicated null-subcarriers in an \ac{OFDM} symbol \\ \hline
         \cite{yang2016backscatter}& BPSK & \ac{BC} over \ac{OFDM} signal with \ac{BD}'s data detection using uncorrupted part of \ac{CP}  \\ \hline 
    \cite{ elmossallamy2019noncoherenta}& OOK&\ac{BC} over the guard subcarriers in the \ac{OFDM} signal\\ \hline
         \cite{elmossallamy2018backscatter}& \ac{FSK} & \ac{BC} over the empty guard subcarriers in the \ac{OFDM} signal\\ \hline
         
    \end{tabular}
\end{table}

\subsection{Simulation Results for OOK Scheme}
In \figurename~\ref{fig:pmd-bd-alpha}, we analyze the \ac{PMD} for non-coherent detection of \ac{OOK} modulated \ac{BC} signals with $\zeta = 1$, $\TotalBDSubcarriers/2$, \ac{PFA} $= 10^{-3}$, $\TotalSubcarriers = 128$, $\TotalBDSubcarriers = 64$ and $\reflectionCoefficient\in\{0.25,0.5,0.75,1\}$. As can be seen from \figurename~\ref{fig:pmd-bd-alpha}, the \ac{PMD} decreases with increasing \ac{SNR} for a specific value of $\reflectionCoefficient$ due to the decrease in the noise power. Furthermore, the increase in the $\reflectionCoefficient$ results in low \ac{PMD}, and the minimum \ac{PMD} is achieved when $\reflectionCoefficient = 1$ because the \ac{BD} reflects the incident signal with the same input power without loss. Specifically, as the \ac{BD}'s reflected power increases from $\reflectionCoefficient = 0.25$ to $\reflectionCoefficient = 1$, the \ac{PMD} decreases from $10^{-2}$ to more than $10^{-3}$ at $30$ dB \ac{SNR}. We compare the performance of the proposed scheme with the baseline \ac{BC} in \cite{elmossallamy2018backscatter}. The proposed scheme achieves the $\pmd$ same as that of the baseline \ac{BC} at $\reflectionCoefficient = 1$ and $25$ dB \ac{SNR} but with a lower reflection coefficient of  $\reflectionCoefficient=0.50$. Also, the theoretical \ac{PMD} results are aligned with the simulation results. The difference is due to the channel correlation. 

\begin{figure}[t!]
    \centering
    \includegraphics[width=\linewidth]{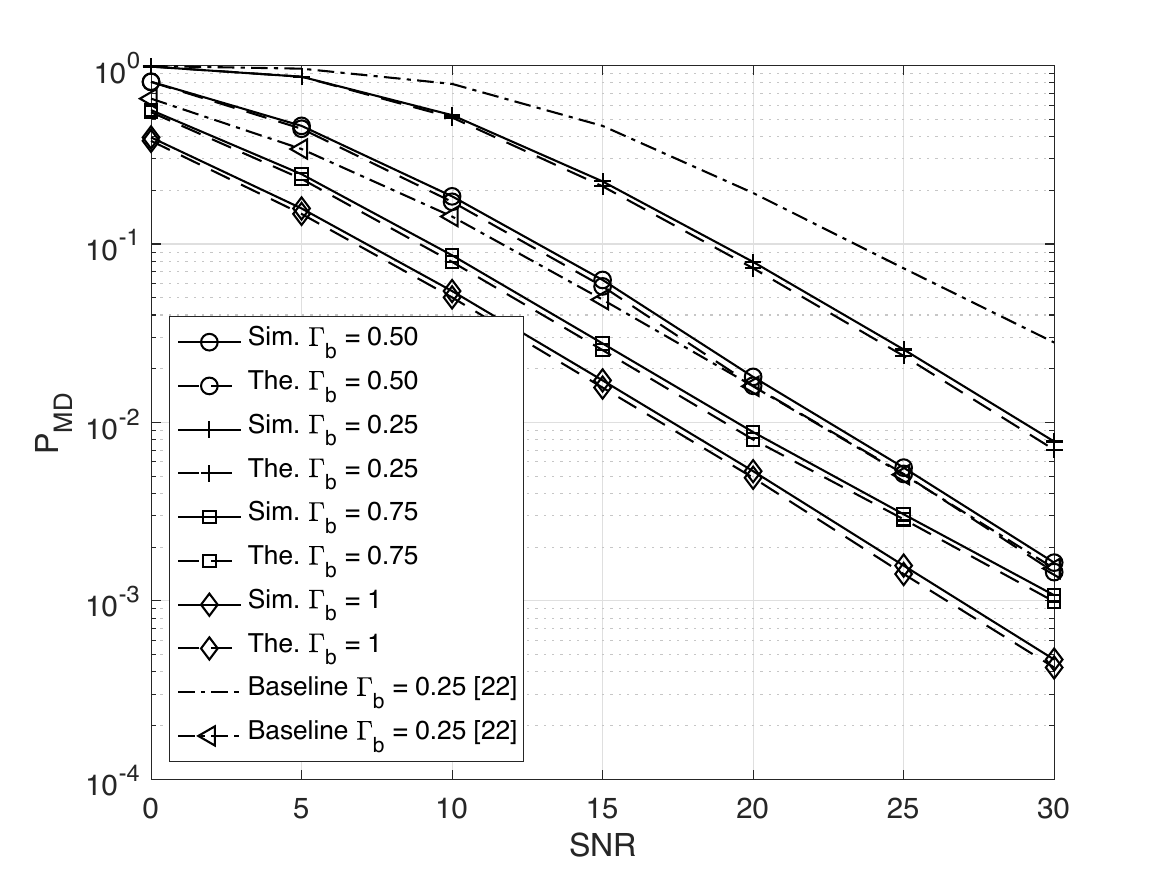}
    \caption{The PMD of BD for different values of SNR obtained from the non-coherent detector with $N=64$ and  $\text{PFA}=10^{-3}$.}
    \label{fig:pmd-bd-alpha}
\end{figure}

In \figurename~\ref{fig:pmd-bd-ng}, we analyze the \ac{PMD} for non-coherent detection of \ac{OOK}-modulated \ac{BC} signals with $\zeta = 1$, $\TotalBDSubcarriers/2$, \ac{PFA} $= 10^{-3}$, $\reflectionCoefficient = 0.25$ and $\TotalSubcarriers \in\{64,128,256,512\}$. From \figurename~\ref{fig:pmd-bd-ng}, it is obvious that the \ac{PMD} depends on the total number of null subcarriers $\TotalBDSubcarriers$ such that as the value of $\TotalBDSubcarriers$ increases from $64$ to $512$, \ac{PMD} reduces to $10^{-2}$ at $20$ dB \ac{SNR}. Compared to the baseline \ac{BC} scheme in  \cite{elmossallamy2018backscatter},   \ac{OOK} achieves significantly lower $\pmd$ for different values of $\TotalSubcarriers$. For instance, the \ac{OOK} scheme achieves $\pmd$ much higher than $10^{-3}$ for  $\TotalSubcarriers = 512$ while the baseline's $\pmd$ is lower than $10^{-3}$ at $30$ dB \ac{SNR}. The simulation results match the theoretical results; however, there is a difference due to channel correlation. Thus, one way to decrease the \ac{PMD} of the \ac{OOK}-modulated \ac{BC} system is to increase $\TotalBDSubcarriers$, which is only possible if the \ac{OFDM} symbol with a large $\TotalSubcarriers$ is transmitted.

\begin{figure}[t!]
    \centering
    \includegraphics[width=\linewidth]{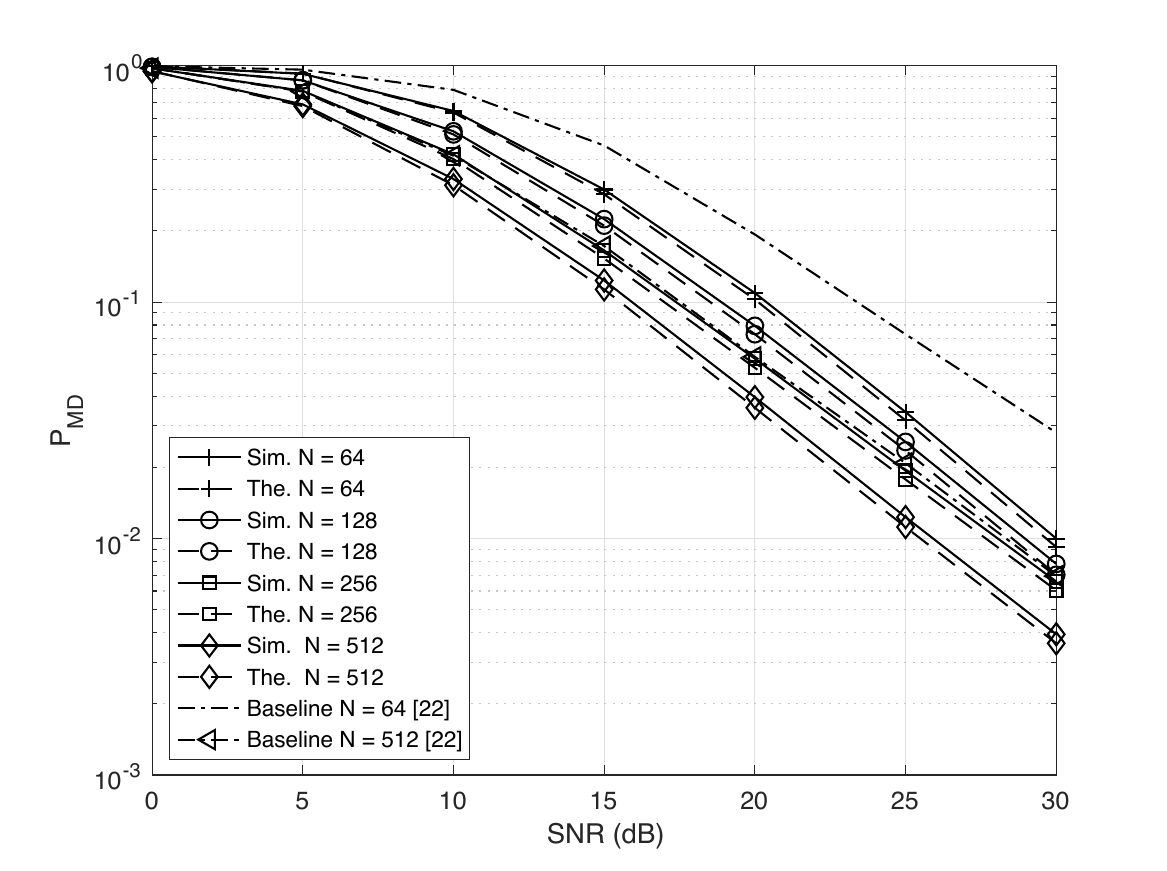}
    \caption{The PMD of BD for different values of SNR and $\TotalSubcarriers \in\{64, 128, 256, 512\}$ obtained from the non-coherent detector with $\reflectionCoefficient = 0.25$.}
    \label{fig:pmd-bd-ng}
\end{figure}

In \figurename~\ref{fig:ROC}, we analyze the \ac{ROC} of the  \ac{OOK} modulated \ac{BC} system with $\reflectionCoefficient = 0.25$, $\TotalSubcarriers=64$, and $\TotalBDSubcarriers=32$ for \ac{SNR} $\in\{0,5,10\}$ dB. The \ac{ROC} curve in \figurename~\ref{fig:ROC} shows that at $0$ dB and $10\%$  \ac{PFA}, around $22\%$ of the \ac{PD} is achieved. As \ac{SNR} increases with $5$ dB, \ac{PD} increases up to $27\%$ with the same value of \ac{PFA}. Furthermore, when the \ac{SNR} is increased by $5$ dB more, the \ac{PD} increases up to $75\%$ at $10\%$ \ac{PFA}. Besides, \ac{ROC} curves obtained through simulations are aligned with the theoretical \ac{ROC} curves for the proposed system with subcarrier allocation for \ac{OOK} modulated \ac{BC}. Thus, the system achieves a higher detection accuracy with the non-coherent detector for \ac{OOK}. 

\begin{figure}[t!]
    \centering
    \includegraphics[width=\linewidth]{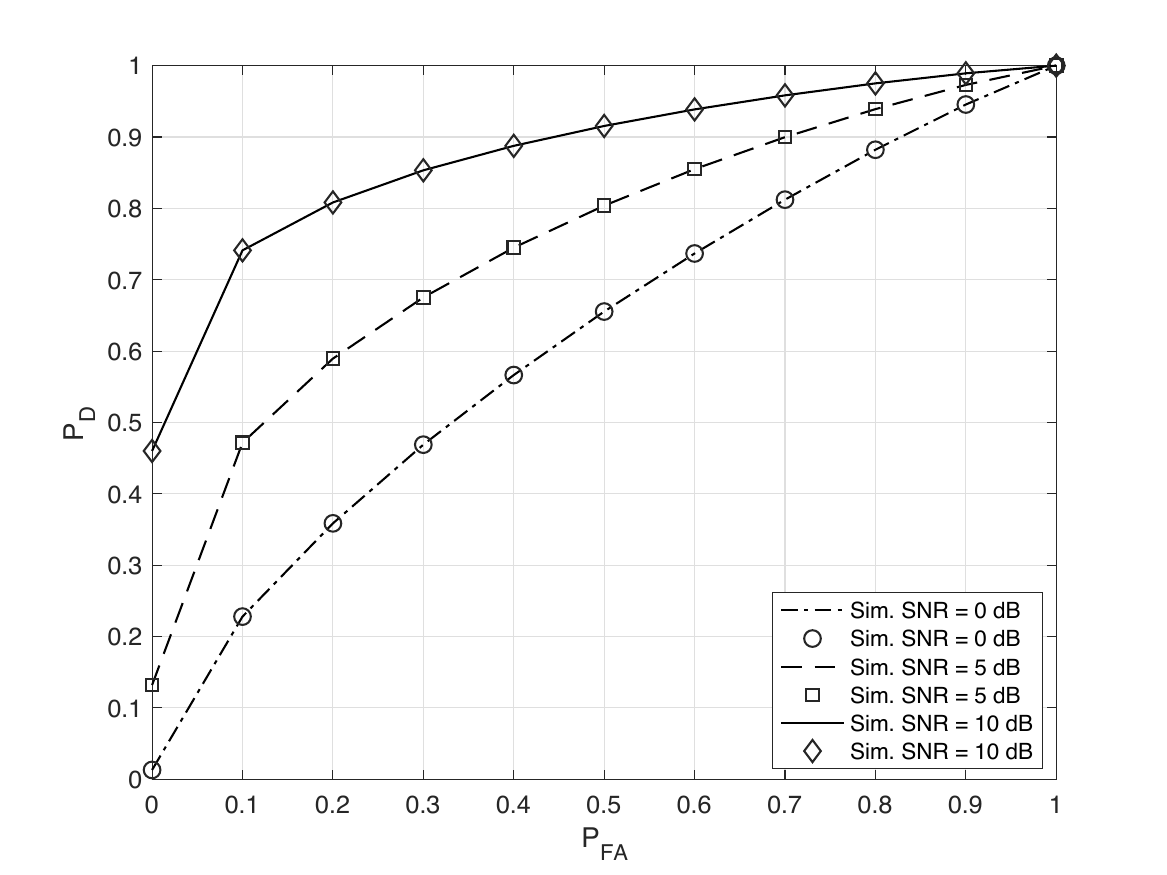}
    \caption{The ROC of the SR system with a non-coherent detector for $\reflectionCoefficient = 0.25$ and $\TotalSubcarriers=64$.}
    \label{fig:ROC}
\end{figure}

\subsection{Simulation Results for FSK Schemes}

In \figurename~\ref{fig:np1_fsk_ber}, we consider \ac{FSK}-1 with $\TotalSubcarriers = 64$, $\TotalBDSubcarriers=\TotalDataSubcarriers+2$, and $\TotalDataSubcarriers=\lfloor(\TotalSubcarriers-1)/2\rfloor$. Out of $\TotalBDSubcarriers= 33$, only two null subcarriers are used to detect the signal according to the subcarrier allocation for \ac{FSK}-1  modulation. The selection of two specific null subcarriers for non-coherent detection is motivated by the fact that the remaining null subcarriers, except for the first and last subcarriers, typically exhibit equal power at the \ac{Rx} for the transmission of both bit $'0'$ and bit $'1'$. Hence, the RX can obtain the transmitted bits based on the power discrepancies observed at the \ac{Rx}. The \ac{FSK}-1 approach enables the non-coherent detector to receive the transmitted data reliably in the presence of the null subcarriers. The system performance is analyzed in terms of \ac{BER} versus \ac{SNR} for different values of $\reflectionCoefficient\in\{0.25,0.50,0.75,1\}$. It can be seen from \figurename~\ref{fig:np1_fsk_ber} that the minimum \ac{BER} is higher than $10^{-2}$ at \ac{SNR} $30$ dB and $\reflectionCoefficient=0.25$ but with increasing the value $\reflectionCoefficient$ to $0.50$, \ac{BER} reduces to $10^{-2}$. However, as the value of the $\reflectionCoefficient$ is increased further to $1$ (i.e., a maximum reflection of power at \ac{BD}), the \ac{BER} goes below $10^{-2}$. The simulation results are also perfectly aligned with the theoretical curves for the proposed non-coherent detection method. We compare the performance of the proposed scheme with the baseline \ac{BC} scheme presented in \cite{elmossallamy2019noncoherenta}, which utilizes the guard subcarriers to transmit the data of the \ac{BD}. The baseline scheme outperforms the \ac{FSK}-1 approach when multiple guard subcarriers are used for \ac{BC}. If only one guard subcarrier is used, FSK-1 achieves performance similar to the baseline scheme. However, the performance gains of \ac{FSK}-1 are primarily derived from the primary communication in the \ac{SR}. In contrast, the primary communication performance degrades significantly in the baseline scheme due to interference between adjacent data subcarriers when \ac{FSK}-based \ac{BC} is employed.

\begin{figure}[t!]
    \centering
    \includegraphics[width=\linewidth]{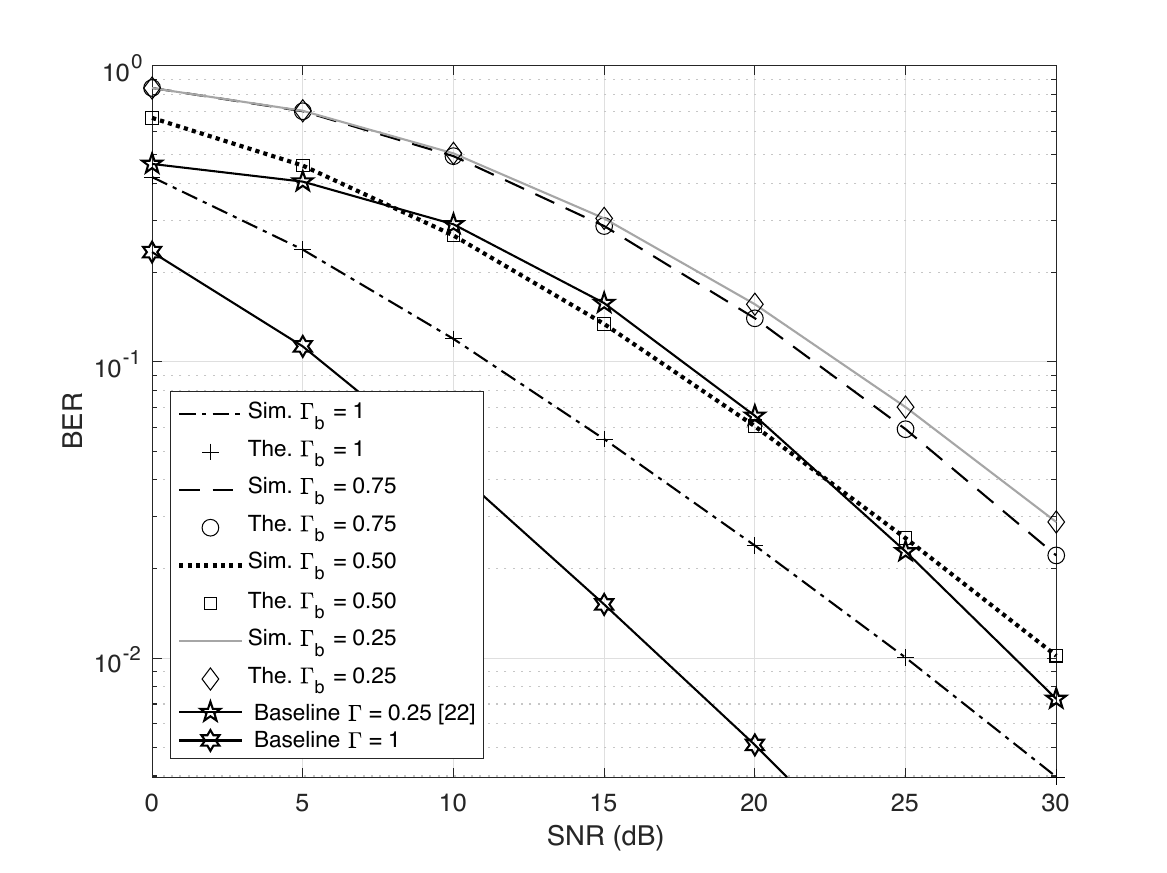}
    \caption{The BER of the non-coherent detector for FSK-1 and $\TotalSubcarriers=64$.}
    \label{fig:np1_fsk_ber}
\end{figure}

\begin{figure}[t!]
    \centering
    \includegraphics[width=\linewidth]{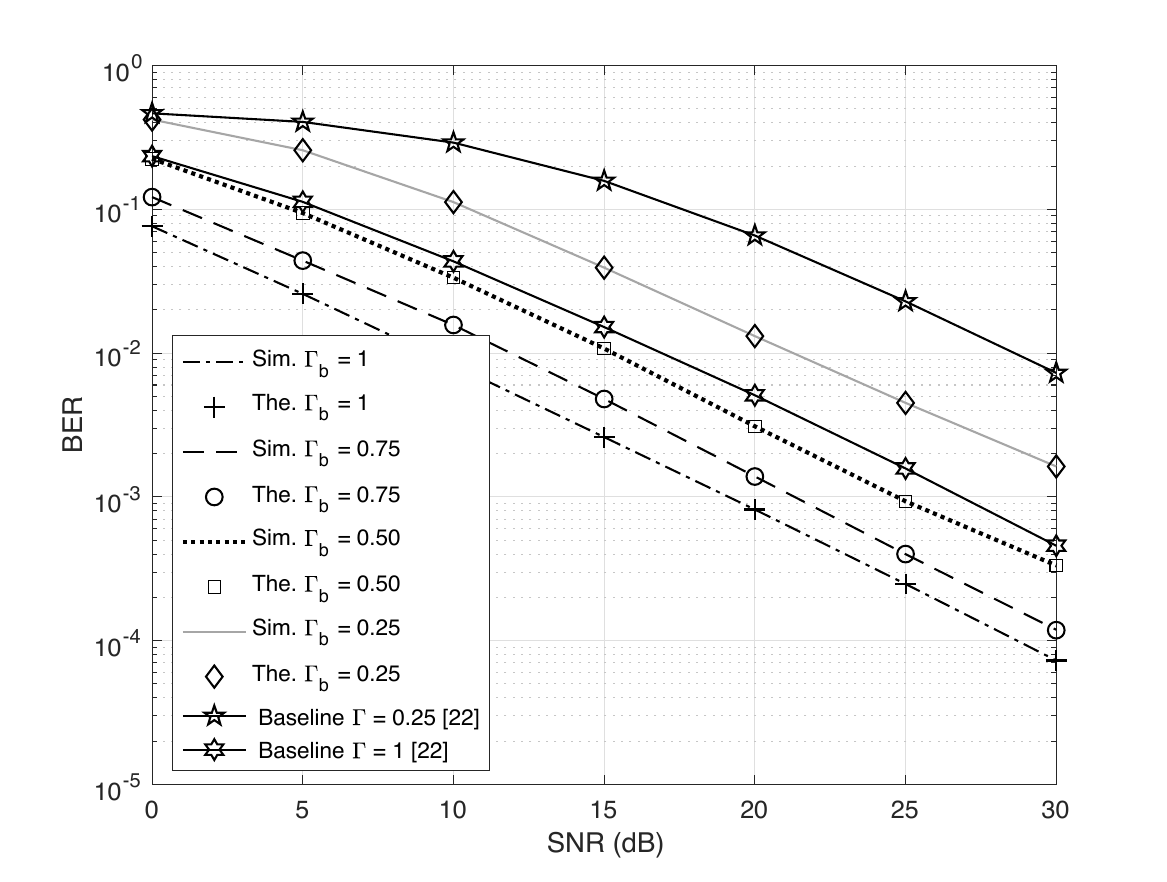}
    \caption{The BER of the non-coherent detector for FSK-2 and $N = 64$.}
    \label{fig:np2_fsk_per_gamma}
\end{figure}

In \figurename~\ref{fig:np2_fsk_per_gamma}, we present the results of non-coherent detection for the proposed  \ac{FSK}-2 scheme in terms of \ac{BER} versus \ac{SNR} with $\TotalSubcarriers = 64$, $\TotalBDSubcarriers=2\TotalDataSubcarriers+1$, and $\TotalDataSubcarriers=(\TotalSubcarriers-1)/3$. We assume that $\TotalBDSubcarriers$ is $42$ and $21$ null subcarriers are used to detect the one information bit transmitted by the \ac{BD}. The \ac{BER} performance is analyzed for different values of $\reflectionCoefficient \in \{0.25,0.50,0.75,1\}$. Selecting the higher value of $\reflectionCoefficient$ results in low \ac{BER} due to an increase in the backscatter signal power. For instance, when \ac{SNR} is $30$~dB, the \ac{BER} decreases notably from $10^{-3}$ to $10^{-4}$ by increasing $\reflectionCoefficient$ from $0.25$ to $1$. Furthermore, the \ac{BER} performance of \ac{FSK}-2 is better than that of \ac{FSK}-1 and the baseline scheme \cite{elmossallamy2019noncoherenta}, as shown in \figurename~\ref{fig:np1_fsk_ber} and \figurename~\ref{fig:np2_fsk_per_gamma}. \ac{FSK}-2 achieves nearly $10^{-1}$ better than the baseline scheme at $\reflectionCoefficient = 0.25$ and $\reflectionCoefficient = 1$ 
 for \ac{SNR} is $30$ dB. The \ac{BER} of \ac{FSK}-2 is $10^{-1}$ times less than \ac{FSK}-1 for $20$~dB \ac{SNR} and $\reflectionCoefficient = 0.25$. The theoretical results also match the simulation results. 

In \figurename~\ref{fig:np2_fsk_per_nfft}, we evaluate the \ac{BER} performance of \ac{FSK}-2 for different values of $\TotalSubcarriers$ and compare it with the baseline approach \cite{elmossallamy2019noncoherenta}. As $\TotalSubcarriers$ increases, $\TotalBDSubcarriers$ also increases, which results in a lower \ac{BER}. For instance, for $20$~dB \ac{SNR}, the \ac{BER} decreases from $10^{-2}$ to $10^{-3}$ when $N$ increases from $64$ to $512$. Therefore, one way to improve the performance of the \ac{SR} with a non-coherent detector for \ac{FSK}-2 \ac{BC} is to increase $\TotalSubcarriers$ without changing  $\reflectionCoefficient$. Furthermore, \ac{FSK}-2 achieves significantly better performance than that of the baseline scheme. For example, at $30$ dB \ac{SNR} value, \ac{FSK}-2 achieves close to $10^{-3}$ where the baseline approach's \ac{BER} is close to $10^{-2}$ for $\reflectionCoefficient=0.25$. 

\begin{figure}[t!]
    \centering
    \includegraphics[width=\linewidth]{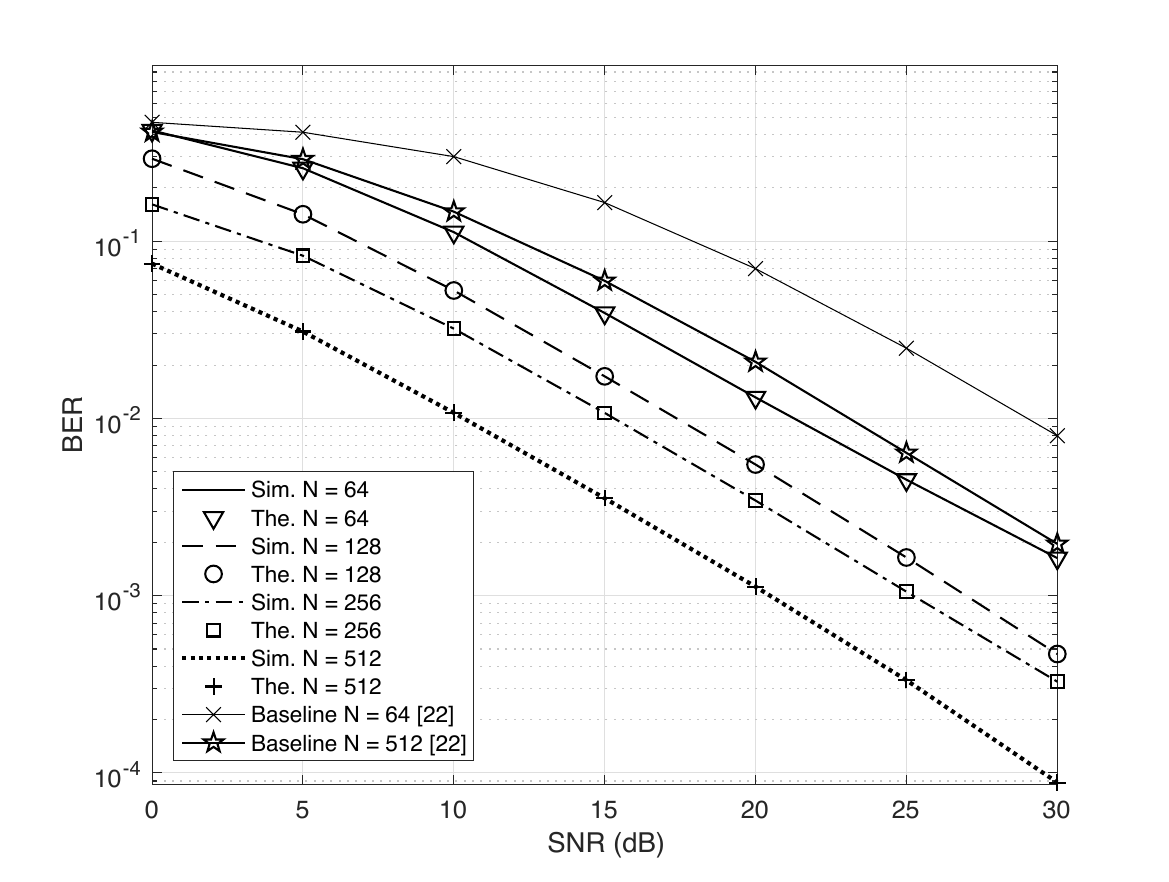}
    \caption{The BER of the non-coherent detector for FSK-2 and $\reflectionCoefficient=0.25$.}
    \label{fig:np2_fsk_per_nfft}
\end{figure}

\begin{figure}[t!]
    \centering
    \includegraphics[width=\linewidth]{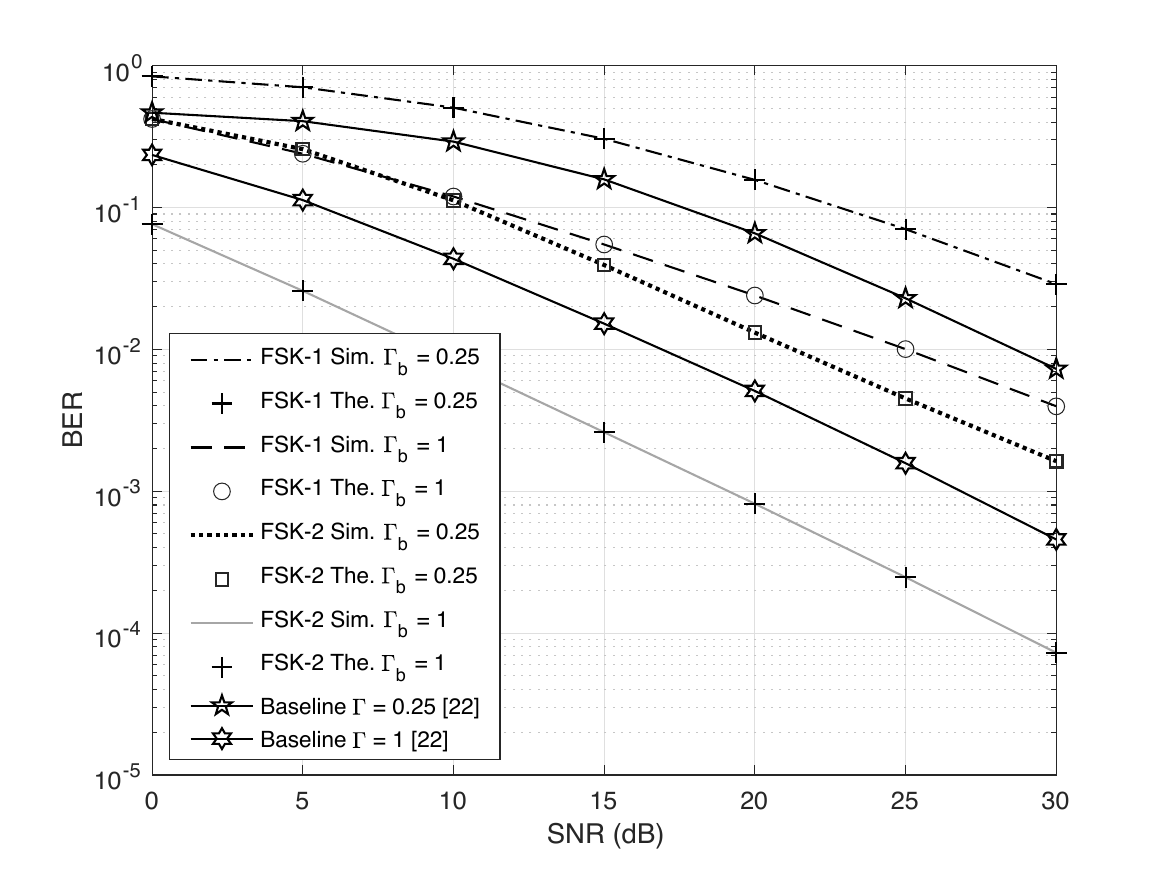}
    \caption{The BER comparison of the non-coherent detector for FSK-1 and FSK-2 with the baseline approach \cite{elmossallamy2018backscatter} when $N = 64$.}
    \label{fig:np12_fsk_per_gamma}
\end{figure}

\begin{figure}
    \centering
    \includegraphics[width=\linewidth]{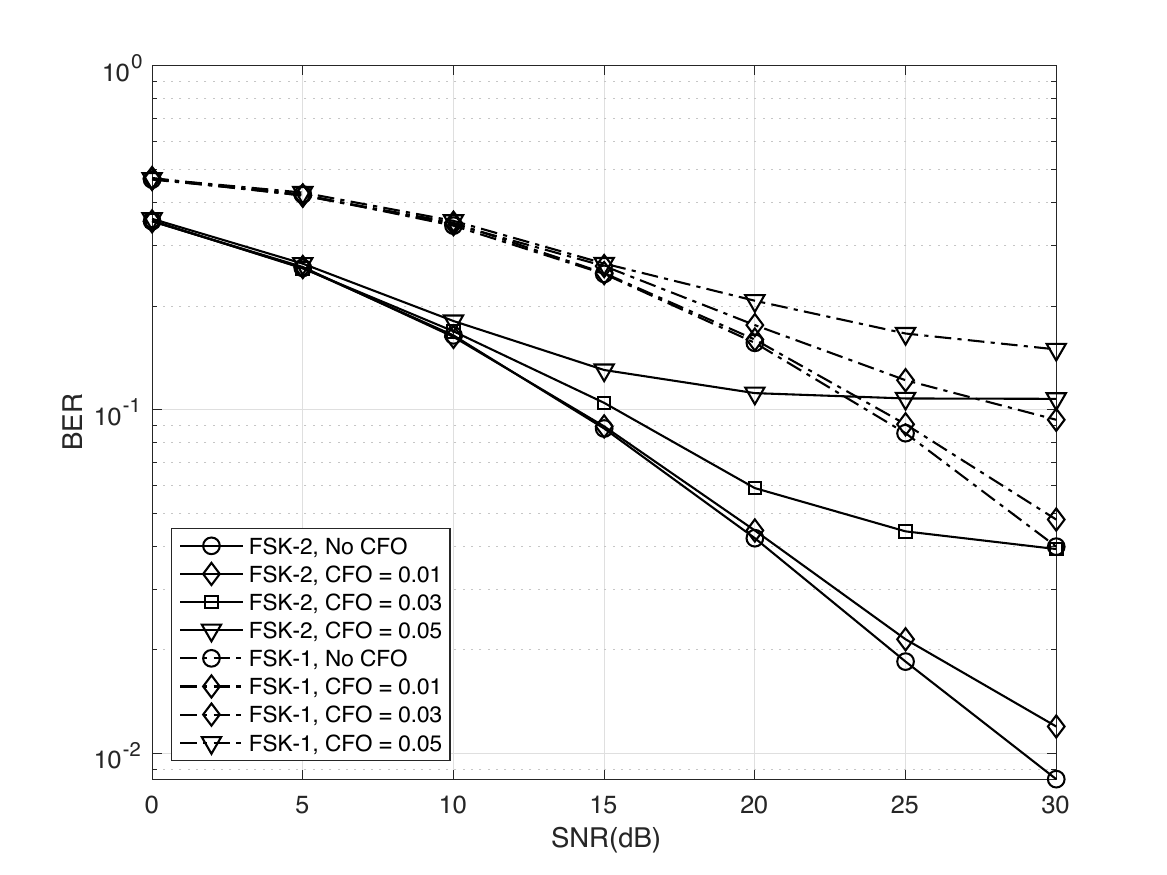}
    \caption{The BER of non-coherent detector for FSK-1 and FSK-2 under CFO impairment.}
    \label{fig:fsk1_fsk2_CFO}
\end{figure}

\begin{figure}
    \centering
    \includegraphics[width=\linewidth]{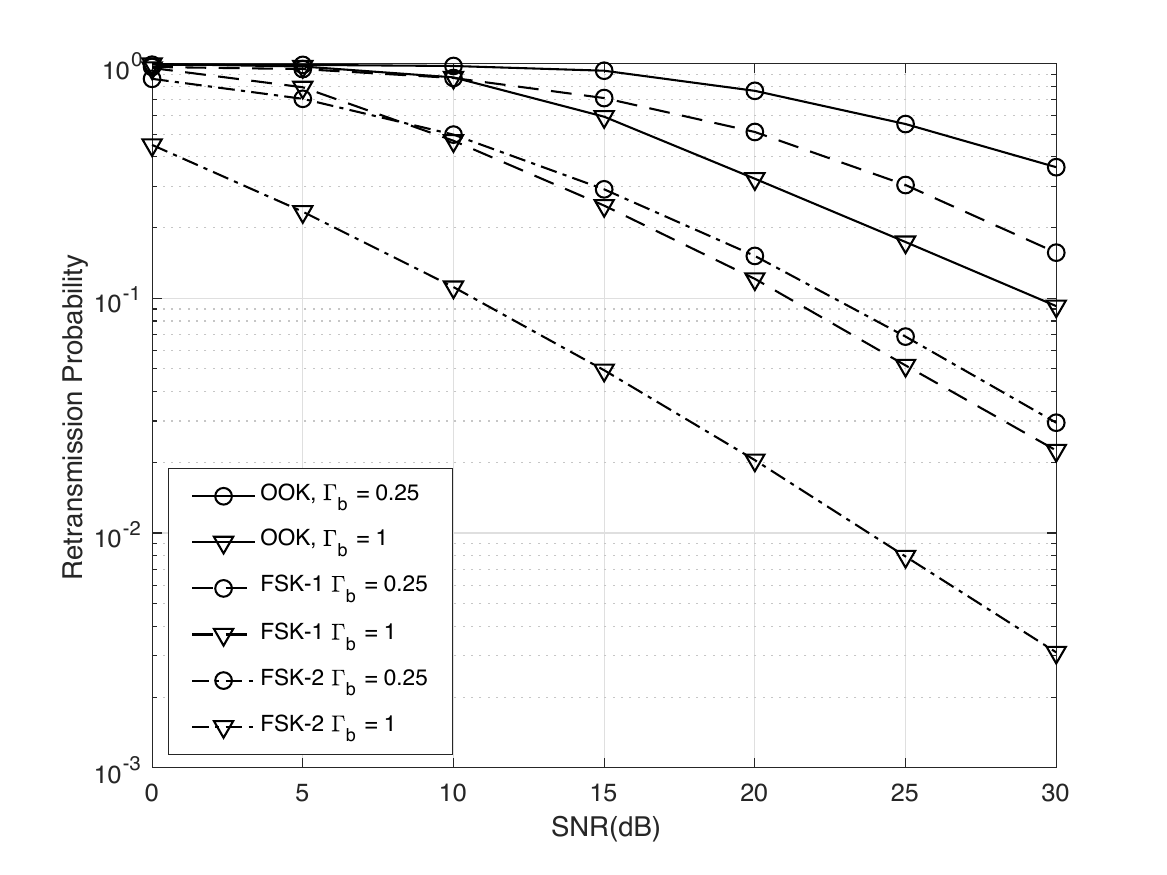}
    \caption{The comparison of the retransmission probabilities of OOK, FSK-1, and FSK-2 with $\TotalSubcarriers = 64$.}
    \label{fig:fsk1sfk2ook_prt_nfft}
\end{figure}

In \figurename~\ref{fig:np12_fsk_per_gamma}, we analyze the performance of \ac{FSK}-1 and \ac{FSK}-2 and compare them with the baseline approach in \cite{elmossallamy2019noncoherenta} for different values of $\reflectionCoefficient$. Similar to \cite{elmossallamy2018backscatter}, the baseline approach uses the guard subcarrier instead of dedicated null subcarriers in the \ac{OFDM} for \ac{BC}. \ac{FSK}-2 outperforms the baseline for different values of $\reflectionCoefficient$ because the value of $\TotalBDSubcarriers$ is greater than the in-band guard subcarriers in an LTE \ac{OFDM}, which can be used for \ac{BC}. For example, the \ac{BER} of \ac{FSK}-2 at $\reflectionCoefficient=0.50$ is higher than the \ac{BER} of the baseline approach at $\reflectionCoefficient = 1$, which means that the \ac{FSK}-2 can achieve better with $50 \%$ reflection of incident power compared to overcome the performance achieved by the baseline with maximum power reflection. Comparatively, \ac{FSK}-1's performance is poor due to the fixed number of null subcarriers for \ac{BC} (i.e., $\TotalBDSubcarriers=1$).    

In \figurename~\ref{fig:fsk1_fsk2_CFO}, we analyze the \ac{BER} performance of the non-coherent detector for different \ac{CFO} values. Since \ac{FSK}-1 and \ac{FSK}-2 schemes shift the incoming signal with a frequency equal to or multiple of the subcarrier spacing (e.g., $15$ kHz in \ac{LTE} and $312.5$ kHz in Wi-Fi). A \ac{CFO} can introduce an extra frequency shift in the primary signal, leading to inter-carrier interference and degrading the quality of the received signal. Hence, the interference between subcarriers causes poor signal detection of primary and \ac{BD} signals at \ac{Rx}. For example, a \ac{CFO} with a value of $0.05$ results in a $10$ times increase in the \ac{BER} of \ac{FSK}-2  at $30$ dB. The performance of \ac{FSK}-1 also degrades due to \ac{CFO} but is comparatively less than that of \ac{FSK}-2. 
The reduced performance of the \ac{FSK}-2 scheme compared to \ac{FSK}-1 can be attributed to the difference in their subcarrier utilization for backscatter signal detection. In the \ac{FSK}-1 scheme, only the edge subcarriers are used for \ac{BD}'s data detection, whereas the \ac{FSK}-2 scheme utilizes all the subcarriers dedicated to \ac{BC}. When \ac{CFO} occurs, the data symbols of the primary communication signal shift and encroach onto the null subcarriers, causing interference. This interference affects the performance of both \ac{FSK}-1 and \ac{FSK}-2 schemes. However, the impact is more pronounced in the \ac{FSK}-2 scheme due to the utilization of all the subcarriers dedicated to \ac{BC}. As the \ac{CFO} increases, the interference from the primary communication data symbols to the null subcarriers becomes more severe. This increased interference has a more detrimental effect on the \ac{FSK}-2 scheme, as all the subcarriers used for \ac{BC} are affected. In contrast, the \ac{FSK}-1 scheme, relying only on the edge subcarriers, is relatively less susceptible to this interference, leading to better performance compared to the \ac{FSK}-2 scheme under large CFO conditions.

While we examine the performance results under \ac{CFO}, it is important to note that interference between data and null subcarriers can also arise due to the Doppler effect in high-mobility environments. The presence of the Doppler effect may impair the performance of the non-coherent detector, leading to a decrease in the overall reliability of the system. Specifically, rapid changes in distance from the transmitter fluctuate the effective \ac{SNR} due to varying path loss and multipath effects, which can further heighten the likelihood of error, consequently elevating the \ac{PMD} and \ac{BER} for \ac{OOK} and \ac{FSK}, respectively.   

\subsection{Retransmission Probability}
In this section, we evaluate the retransmission probability (i.e., the ratio of the retransmitted frames to the total number of transmitted frames) for the non-coherent detector for the proposed schemes based on \ac{CRC}. We assume a $5$-bit \ac{CRC} encoder/decoder according to the \ac{RFID} Gen2 standard \cite{epcglobalradio}. The \ac{CRC} bits are generated using the polynomial $x^5+x^3+1$. \ac{BD} append \ac{CRC} bit to the information bits before transmission. The  \ac{Rx} receives the backscatter signal and applies the  \ac{CRC} decoder after \ac{CP} removal, \ac{IDFT}, demodulation, and non-coherent detection processes. If the \ac{CRC} fails, the \ac{Rx} requests the retransmissions of the data frame through the \ac{ARQ} protocol.

In \figurename~\ref{fig:fsk1sfk2ook_prt_nfft}, we present the retransmission probability versus \ac{SNR} for the proposed \ac{OOK}, \ac{FSK}-1, and \ac{FSK}-2 schemes with non-coherent detection. The frame length is $12$ \ac{BD} bits including $5$ \ac{CRC} bits. The performance is evaluated for  $\TotalSubcarriers = 64$. It can be seen from \figurename~\ref{fig:fsk1sfk2ook_prt_nfft} that the \ac{FSK}-1 provides a lower retransmission probability than \ac{OOK} at \ac{SNR} greater than $10$~dB. However, \ac{FSK}-2 outperforms both \ac{OOK} and \ac{FSK}-1 by achieving the lowest retransmission probability at different values of $\reflectionCoefficient\in\{0.25, 1\}$. Therefore, to achieve a smaller number of retransmissions in \ac{SR} for \ac{BC}, the \ac{FSK}-2 scheme is preferred even at low $\reflectionCoefficient$ with fixed $\TotalSubcarriers$.  

\subsection{Computational Complexity}
\label{sec: computational complexity}
To analyze the computational complexity of the proposed schemes, we need to consider both the primary signal design and information modulation at the \ac{BD}. At the \ac{BS}, the primary signal design involves allocating subcarriers to the \ac{Rx} and \ac{BD} within an OFDM symbol. This aspect of the design has a computational complexity comparable to conventional \ac{OFDM} systems used in LTE and 5G standards. The computational complexity of the conventional OFDM and our proposed transmitter's complexity comes from \ac{FFT} operations, which is  $\mathcal{O}(\TotalSubcarriers \log \TotalSubcarriers)$ \cite{farhang2016ofdm}.

When evaluating the complexity associated with different modulation schemes at the BD, it is important to consider how each scheme affects the operations required for information transmission. For conventional OOK, the BD only needs to shift the signal when transmitting a bit '1'; thus, the complexity is minimal. In contrast, for conventional FSK, the BD must perform frequency shifts for both bits '1' and '0', which increases complexity. The computational complexity of the BD, therefore, depends on the proposed modulation scheme used. For the proposed FSK-1 and FSK-2 schemes, which involve $\Omega$ levels of impedance switching, the complexity is $\mathcal{O}(\Omega)$. For the proposed OOK scheme, the impedance switching complexity is $\mathcal{O}(1)$ when transmitting a bit '0' and $\mathcal{O}(\Omega)$ when transmitting a bit '1'. However, since the maximum complexity required for either bit is $\mathcal{O}(\Omega)$, the overall complexity for OOK also aligns with $\mathcal{O}(\Omega)$. This means that while the proposed OOK scheme may offer simplicity in certain cases, its complexity remains comparable to that of FSK due to the impedance switching requirements.

\subsection{Implementation Complexity and Energy Efficiency}
 As discussed in \ref{sec: computational complexity}, the proposed \ac{OOK}, \ac{FSK}-I, and \ac{FSK}-II modulation schemes require frequency shifts in the primary signal at the \ac{BD} to transmit information bits. The \ac{OOK} modulation involves turning the BD on and performing frequency shift only to transmit bit '1' while  FSK-I and FSK-II need frequency shift for both bits '1' and '0'. Implementing frequency shift at the \ac{BD} entails $\Omega$ levels of impedance switching, which can be implemented using the available hardware designs \cite{iyer2016inter, wang2020low}. Nothing comes for free, higher levels of impedance switching consume more power at the \ac{BD}. Therefore, the added complexity of impedance switching in the \ac{FSK} schemes makes them less energy efficient compared to \ac{OOK}. However, efficient integrated circuits can be employed to enhance energy efficiency across all modulation methods.

\section{Concluding Remarks}
\label{sec:conclusion}

This study presents a method for interference-free \ac{BC} in an \ac{OFDM}-based \ac{SR} system. We design the primary \ac{OFDM} signal to eliminate direct link interference by allocating subcarriers for the \ac{BD} and \ac{Rx} within the \ac{OFDM} symbol, with the \ac{BD} using \ac{FSK} modulation over dedicated null subcarriers. We explore three subcarrier allocation strategies: \ac{OOK}, \ac{FSK}-1, and \ac{FSK}-2. While \ac{FSK}-1 is more spectrally efficient, it has a higher \ac{BER}. In contrast, \ac{FSK}-2 and \ac{OOK} offer greater reliability at the cost of spectral efficiency, with \ac{FSK}-2 exhibiting a lower retransmission probability.
To receive \ac{BD} data without interference, we designed a non-coherent detector for the \ac{Rx}. Our theoretical and simulation results indicate that increasing the \ac{OFDM} symbol size or reflection coefficient significantly reduces the \ac{BER} for \ac{BC}. The proposed method is also beneficial for non-cooperative \ac{SR} systems where the primary user and \ac{BC}
receiver are separated, allowing for independent signal detection using coherent and non-coherent
detectors. This advancement has the potential to improve \ac{IoT} deployments. Additionally, the proposed
solution can reduce interference in various applications in various fields such as retail, healthcare, logistics,
and supply chain management, where \ac{IoT} devices rely on data transmission via \ac{OFDM}-based systems like
Wi-Fi, LTE, and 5G.  However, the proposed schemes face challenges related to spectral efficiency due to dedicated resource allocation, which limits subcarriers for primary communication and creates a trade-off between \ac{BC} reliability and data rate. Future research could explore simultaneous multiplexing of multiple \acp{BD} and users to manage mutual and direct-link interference in \ac{OFDM} systems, leveraging interference detection methods such as energy detection, matched filtering, and spectrum sensing. Additionally, future studies might develop energy harvesting analysis considering varying conditions and fluctuations from the Wi-Fi, LTE, or 5G signals. Lastly, enhancing the security of \ac{OFDM}-based \ac{SR} systems is essential to protect both \ac{BC} and primary communication, potentially using lightweight cryptographic techniques alongside physical layer security. The 

\appendices

\section{The Proof of Lemma~\ref{lemma:1}}
\label{prof:lemma:pdfOOK}
\begin{proof}
The received signal is the function of the product of two random variables: $|h_b|$, a Rayleigh random variable, and the other channel $|h_f|^2$, an exponential random variable. Let, 
\begin{equation}
    u = \sum_{m=1}^{N_b}|h_f[m]|^2,
\end{equation}
be the sum of the exponential random variables while the characteristic function of an exponential random variable  $f(x;\lambda_m)=\lambda_m e^\lambda_m x$ for  $\lambda_m$ is
\begin{equation}
    \Phi_m(t)=\lambda_m\int e^{(it-\lambda_m)z}dx=\left(1+it\lambda^{-1}\right)^{-1},
\end{equation}
where $\Phi_m(t)$ is the Fourier transform of $f(x;\lambda_m)$. 

To find the distribution of $u$, we need to find the convolution of $N_b$ exponential random variables, which is equivalent to the product of its characteristic functions 
\begin{equation}
    \begin{split}
         \Phi_u &= \int e^{itz}(\lambda_1e^{\lambda_1z}*\lambda_1e^{\lambda_2z}*\cdots*\lambda_{N_b}e^{\lambda_{N_b}z}) dz\\
         &= \prod_{m=1}^{N_b}\frac{1}{1+it\lambda_m}.
    \end{split} 
\end{equation}
Now, the \ac{PDF} of $u$ can be found by taking the inverse Fourier transform of $\Phi_u(t)$ as
\begin{equation}
    f_{u}(u) = \int_{-\infty}^\infty \frac{1}{2\pi j t}\prod_{m=1}^{N_b}\frac{1}{1+it\lambda^{-1}} dt.
\end{equation}
\end{proof}

\section{The Proof of Theorem }
\label{proof:theorem:pdffsk}
\begin{proof}
The received signal is the function of the product of two random variables: $ |h_b|$, a Rayleigh random variable, and the other channel $|h_f|^2$, an exponential random variable. Let,  
be expressed as
\begin{equation}
    u = \sum_{m=1}^{N_b}|h_f[m]|^2~, 
\end{equation} 
be the sum of the exponential random variables while the characteristic function of an exponential random variable  $f(x;\lambda_m)=\lambda_m e^\lambda_m x$ for  $\lambda_m$ is
\begin{equation}
    \Phi_m(t)=\lambda_m\int e^{(it-\lambda_m)z}dx=\left(1+it\lambda^{-1}\right)^{-1}~,
\end{equation}
where $\Phi_m(t)$ is the Fourier transform of $f(x;\lambda_m)$. 

To find the distribution of $u$, we need to find the convolution of $N_b$ exponential random variables, which is equivalent to the product of its characteristic functions 
\begin{equation}
    \begin{split}
         \Phi_u &= \int e^{itz}(\lambda_1e^{\lambda_1z}\times\lambda_1e^{\lambda_2z}\times\cdots\times\lambda_{N_b}e^{\lambda_{N_b}z}) \text{d}z\\
         &= \prod_{m=1}^{N_b}\frac{1}{1+it\lambda_m}~.
    \end{split} 
\end{equation}
Now, the \ac{PDF} of $u$ can be found by taking the inverse Fourier transform of $\Phi_u(t)$ as \cite{sahin2023reliable}
\begin{equation}
    f_{u}(u) = \int_{-\infty}^\infty \frac{1}{2\pi j t}\prod_{m=1}^{N_b}\frac{1}{1+it\lambda^{-1}} \text{d}t~.
\end{equation}
Hence, we obtain the \ac{CDF} of $\ts{0}-\ts{1}$ as
\begin{equation}
    \begin{aligned}
    &\cdf{\ts{0}-\ts{1}}{\eta} = \int_0^\infty \cdf{\ts{0}-\ts{1}}{\threshold;v} f(v)\text{d}v\\
    &= \int_0^\infty \left(\frac{1}{2} - \int_{-\infty}^{\infty} \frac{\characfun{\ts{0}|v}{t}\characfun{\ts{1}|v}{t}^*}{2\pi jt}  {\text{d}}te^{-jtz}\right) f(v)\text{d}v\\
    &=\frac{1}{2}\int_0^\infty f(v)\text{d}v-\int_0^\infty\int_{-\infty}^\infty\frac{\characfun{\ts{0}|v}{t}\characfun{\ts{1}|v}{t}^*}{2\pi jt}e^{-jtz}\text{d}tf(v)\text{d}v~.
    \end{aligned}\nonumber
\end{equation}
\end{proof}



\end{document}